\definecolor{earth}{rgb}{0.549,0,0}
\newcommand{\comment}[1]{}
\newcommand{\ket}[1]{|#1\rangle}
\renewcommand{\Pr}[0]{\operatorname{P}}
\renewcommand{\lg}{\log_2}
\newcommand{\CTRL}[0]{{\operatorname{\textsc{ctrl}}}}
\newcommand{\SIFT}[0]{{\operatorname{\textsc{sift}}}}
\newcommand{\TEST}[0]{{\operatorname{\textsc{test}}}}
\newcommand{\INFO}[0]{{\operatorname{\textsc{info}}}}
\newcommand{\final}[0]{{\operatorname{final}}}
\def\txtline#1{\noalign{\hbox{\strut\hskip\@totalleftmargin {#1}}}} 
\def\@viiipt{8.5} 
\def\@ixpt{9.5}
\def\@xpt{10.5}
\def\@xipt{11.5}
\def\normalsize{\@setfontsize\normalsize\@xpt{12}}
\def\small{\@setfontsize\small\@ixpt{11}}
\def\footnotesize{\@setfontsize\footnotesize\@viiipt{9.5pt}}%
\def\large{\@setfontsize\large{12.5}{14pt}}%
\newtheorem{prop}{Proposition}
\newtheorem{lemma}[prop]{Lemma}
\newtheorem{theorem}[prop]{Theorem}
\theoremstyle{remark}
\newtheorem{example}{Example}
\theoremstyle{definition}
\begin{document}
 \setlength{\leftmargini}{1.2em}

\title{Semi-Quantum Key Distribution}

\author{Michel Boyer$^{1}$, Ran Gelles$^2$, Dan Kenigsberg$^2$ and Tal Mor$^2$\\
\small 1. D\'epartement IRO, Universit\'e de Montr\'eal,
  Montr\'eal (Qu\'ebec) H3C 3J7 \textsc{Canada} \\
\small 2. Computer Science Department, Technion, 
  Haifa 32000 \textsc{Israel}}

\date{\today}
\begin{abstract}

Secure key distribution among two remote 
parties is impossible when
both are classical, unless some unproven (and arguably unrealistic) 
computation-complexity assumptions are made,
such as the difficulty of factorizing large numbers.
On the other hand, 
a secure key distribution \emph{is} possible when both
parties are quantum.
What \emph{is} possible when only one party (Alice) is quantum, yet the other
(Bob) has only classical capabilities?
Recently, a semi-quantum key distribution protocol was
presented (Boyer, Kenigsberg and Mor, Physical Review Letters, 2007),
in which one of the parties (Bob) is classical, and yet,
the protocol is proven to be completely robust against an
eavesdropping attempt.
Here we extend that result much further.
We present two protocols with this constraint,
and prove their robustness against attacks: we prove
that any attempt of an adversary to obtain information
(and even a tiny amount of  information)  
necessarily induces some errors that the legitimate  parties could notice.
One protocol presented here is identical to the one 
referred to above, 
however, its robustness is proven here in a much
more general scenario.
The other protocol is very different as it is
based on randomization.
\end{abstract}

\maketitle
\section{Introduction}
\label{sec:intro}   
Processing information using quantum two-level systems (qubits),
instead of
classical two-state systems (bits), has lead to many striking results
such as the teleportation of unknown quantum states
and quantum algorithms
that are exponentially faster than their known classical counterpart.
Given a quantum computer,  Shor's factoring algorithm would render many of the currently
used encryption protocols completely insecure, 
but as a countermeasure, quantum information processing has also
given
quantum cryptography. 
Quantum key distribution
was invented by Bennett and Brassard (BB84),
to provide a new type of solution to
one of the most important cryptographic problems: the transmission
of secret messages.
A key distributed via quantum cryptography
techniques can be secure even against an eavesdropper with unlimited
computing power,
and the security is 
guaranteed forever.

The conventional setting is as follows:
Alice and Bob have labs that are perfectly secure,
they use qubits for their quantum communication, and
they have access to a classical communication channel
which can be heard, but cannot be jammed
(i.e.\ cannot be tampered with) by the eavesdropper.
The last assumption can easily be justified if Alice
and Bob can broadcast messages, or if they already share some
small number of secret bits in advance, to authenticate the
classical channel.

In the well-known BB84 protocol as well as in 
all other QKD protocols prior to~\cite{BKM07},
both Alice and Bob perform quantum 
operations on their qubits (or on their quantum systems). 
The question of how much ``quantum'' a protocol  needs to be in order to achieve 
a significant 
advantage over all classical protocols is of great interest.
For example, \cite{Popescu99,JL02,BBKM02,KMR06} discuss whether entanglement is
necessary for quantum computation, \cite{nonloc.noent99} shows nonlocality
without entanglement, and \cite{GPW05,FuchsSasaki03} discuss how
much of the information carried by various 
quantum states is actually classical.
This discussion was extended into the 
{\em quantum cryptography} domain in~\cite{BKM07}
where we presented and analyzed a protocol
in which one party (Bob) is classical. 
For our purposes, 
any two orthogonal states of the quantum two-level system can be chosen  
to be the computational basis $\ket{0}$ and $\ket{1}$.
For reasons that will soon become clear, 
we shall now call the computational basis ``classical'' and we shall use the 
classical notations $\{0,1\}$ to describe  
the two quantum states  
$\{\ket{0},\ket{1}\}$    
defining this basis.
In the protocols we discuss, 
a quantum channel travels from Alice's lab to
the outside world and back to her lab.
Bob can access a segment of the channel, and whenever a qubit 
passes through that segment Bob can either 
let it go undisturbed or 
(1) measure the qubit in the classical $\{0,1\}$ basis;
(2) prepare a (fresh) qubit in the classical basis, and send it;
(3) reorder the qubits (by using different delay lines, for instance).
If all parties were limited to performing only
operations (1)--(3), or doing nothing,
they would always be working with qubits in the classical basis,
and could never obtain any
quantum superposition of the computational-basis states; the qubits can then
be considered ``classical bits''; the resulting protocol would then be
equivalent to an old-fashion classical protocol, and
therefore, the operations themselves shall here be considered classical.
We  term this kind of protocol ``QKD protocol with classical Bob'' 
or Semi-Quantum Key Distribution (SQKD).
We discuss and analyze two different variants of such a protocol.
In one Bob performs operations (1) and (2) or transfer the qubit back to Alice; 
this variant is therefore named {\em measure-resend SQKD}.
The other variant is based on randomization and named 
{\em randomization-based SQKD}. In this variant Bob is restricted to
perform operations (1) and (3), or do nothing.
%
This work extends the results of~\cite{BKM07}, 
by first generalizing the conditions under which 
the results of~\cite{BKM07} hold for the {\em measure-resend SQKD},
specifically, proving that robustness still 
holds when the qubits are sent one by one and are attacked collectively.
In addition we define and analyze a {\em randomization-based SQKD} 
which leaks no information at all and results with a secret string 
with entropy exponentially close to its length.
We provide a full proof of robustness for this variant as well.

To define our protocols we 
follow the definition (see for instance~\cite{BBBMR06}) 
of the most standard QKD protocol, BB84.
The BB84 protocol consists of two major parts: a first part that is aimed at
creating a \emph{sifted key}, and a second (fully classical) 
part aimed at extracting an error-free,
secure, \emph{final key} from the sifted key. 
In the first part of BB84, Alice randomly selects a binary value and
randomly selects in which basis to send it to Bob, either 
the computational (``$Z$'') basis $\{\ket0,\ket1\}$, or 
the Hadamard (``$X$'') basis $\{\ket+,\ket-\}$.
Bob measures each qubit in either basis at random. 
An equivalent description is obtained if Alice
and  Bob use only the classical operations (1) and (2) above
and the Hadamard\footnote{$H\ket{0}=\ket{+}$;
$H\ket{1}=\ket{-}$.}  quantum gate $H$.
%
%
After all qubits have been
sent and measured, Alice and Bob publish which bases they used. For
approximately half of the qubits Alice and Bob used mismatching bases and these
qubits are discarded. The values of the rest of the bits make the sifted key.
The sifted key is identical in Alice's and Bob's hands if  
the protocol is error-free and if there is no eavesdropper (known as Eve)
trying to learn the shared bits or some function of them.
In the second part Alice and Bob use some of the bits of the sifted key 
(the $\TEST$ bits) to test the error-rate, 
and if it is below some pre-agreed threshold, 
they select an $\INFO$ string from the rest of the sifted key. 
Finally, an error
correcting code (ECC) is used to correct the errors on the $\INFO$ string
(the $\INFO$ bits), and privacy amplification (PA)
is used to derive a shorter but unconditionally
secure final key from these $\INFO$ bits.
At that point we would like to mention a key feature relevant
to our protocols:
it is sufficient to use qubits in just one basis, $Z$, 
for generating the $\INFO$ string,
while the other basis is used only for finding the actions of
an adversary~\cite{Mor98}.

A conventional measure of security is the information
Eve can obtain on the final key,
and a security proof usually calculates (or puts bounds on)
this information. 
The strongest (most general) attacks allowed by quantum mechanics
are called {\em joint attacks}. 
These attacks are aimed to learn something about the final 
(secret) key directly, 
by using a probe through which all qubits pass, and by measuring
the probe after all classical information becomes public.
Security against all joint attacks is considered as ``unconditional security''.
The security of BB84 (with perfect qubits
sent from Alice to Bob) against all joint attacks was first proven 
in~\cite{Mayers,Shor-Preskill,BBBMR06}   
via various techniques.

\section{Robustness}
An important step in studying security is a proof of robustness; see for
instance \cite{BBM92}
for robustness proof of the entanglement-based protocol,
and \cite{SARG04,AGS04} 
for a proof of robustness against the photon-number-splitting attack.
Robustness of a protocol means that any adversarial attempt
to learn some information necessarily induces some disturbance. 
It is a special case (in zero noise) of
the more general ``information versus disturbance" measure which provides
explicit bound on the information available to Eve as a function of the induced
error. 
Robustness also generalizes the no-cloning theorem: while the 
no-cloning theorem states that a state cannot be cloned, robustness means that 
any attempt to make an imprint of a state (even an extremely weak 
imprint) necessarily disturbs the quantum state.

A protocol is said to be
\emph{completely robust} if nonzero information acquired by Eve on the 
$\INFO$ string
implies nonzero
probability that the legitimate participants find errors on the bits tested by the protocol.
A protocol is said to be \emph{completely nonrobust} if Eve can 
acquire the $\INFO$ string without inducing any error on the bits tested by the protocol.
A protocol is said to be
\emph{partly robust} if Eve can acquire some limited information on the 
$\INFO$ string without inducing any error on the bits tested by the protocol.

Partially robust protocols could still be secure, yet completely nonrobust 
protocols are automatically proven insecure. 
See also Fig.~\ref{fig:robustness}.
As one example, BB84 is fully robust when qubits are used by Alice and Bob 
but it is only partly robust 
if photon pulses are used and sometimes two-photon pulses
are sent. The well known two-state protocol (also called Bennett92 protocol)
is not fully robust even if perfect qubits are used, if realistic 
channel losses are taken into account.
Such partly robust protocols can still lead to a secure final key
if enough bits are sacrified for privacy amplification.
On the other hand, 
such partly robust protocols can become completely nonrobust
(and therefore totally insecure) if the loss rate is sufficiently high.


 \begin{figure}
 \includegraphics[width=\columnwidth]{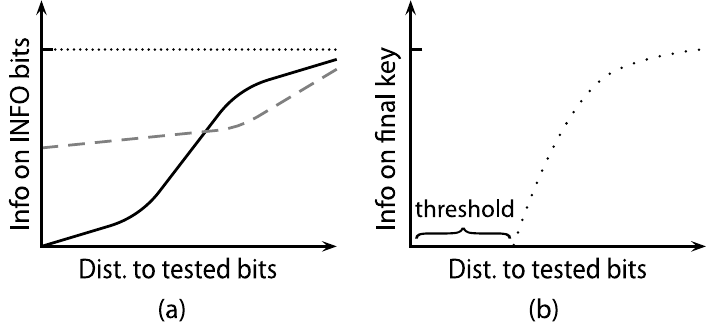}
 \caption{(a) Eve's maximum (over all attacks) information 
  on the $\INFO$ string vs. the allowed disturbance on the bits tested by
  Alice and Bob, in a completely robust (solid line), partly robust
  (dashed), and completely nonrobust (densely dotted) protocol.
  (b) Robustness should not be confused with security; Eve's maximum 
  information on the \emph{final} key vs. allowed disturbance in a secure
  protocol; such a protocol could be completely or partly robust.}
  \label{fig:robustness}
  \end{figure}
 

\section{Mock protocol and its complete nonrobustness}
\label{sec:idea}

Consider the following mock protocol:
Alice flips a coin to decide whether to send a random bit in the computational
basis $\{\ket{0},\ket{1}\}$ (``$Z$''), or in the Hadamard basis
$\{\ket{+},\ket{-}\}$ (``$X$''). 
Bob flips a coin to decide whether to measure Alice's qubit in the computational
basis (to ``$\SIFT$'' it) or to reflect it back (``$\CTRL$''), without causing any
modification to the information carrier.
In case Alice chose $Z$ and Bob decided to $\SIFT$, i.e.\ to measure in the $Z$
basis, they share a random
bit that we call $\SIFT$ or sifted bit (that may, or may not, be confidential). 
In case Bob chose $\CTRL$, Alice can check if the qubit returned
unchanged, by measuring it in the basis she sent it. 
In case Bob chose to $\SIFT$ and Alice chose the $X$ basis,
they discard that bit.
The idea that just one basis, the $Z$-basis, is sufficient for the key
generation (while the other basis is used for finding the actions of
an adversary) appeared already in \cite{Mor98}.
The above iteration is repeated for a predefined number of times. 
At the end of the quantum part of the protocol Alice and Bob share, 
with high probability, a considerable amount of $\SIFT$ bits (also known
as the ``sifted key'').
In order to make sure that Eve cannot gain much information by measuring
(and resending) all
qubits in the $Z$ basis, 
Alice can check whether they have a low-enough level of discrepancy
on the $X$-basis  $\CTRL$ bits. 
In order to make sure that their sifted key is reliable, 
Alice and Bob must sacrifice
a random subset of the $\SIFT$ bits, which we denote as $\TEST$ bits, 
and remain with a string of bits which we call $\INFO$ bits 
($\INFO$ and $\TEST$ are common in QKD, e.g., in BB84 as previously described).

By comparing the value of the $\TEST$ bits,
Alice and Bob can estimate the error rate on the $\INFO$ bits.
If the error rate on the $\INFO$ bits 
is sufficiently small, 
they can then use an appropriate Error Correction Code (ECC) 
in order to correct the errors.
If the error rate on the $X$-basis $\CTRL$ bits  
is sufficiently small, Alice and Bob can bound Eve's information,
and can then use an appropriate Privacy
Amplification (PA) in order to 
obtain any desired level of privacy.

At first glance, 
this protocol may look like a nice way to transfer 
a secret bit from quantum Alice to classical Bob: It is
probably resistant to opaque (intercept-resend) attacks.

However, it is {\em completely non-robust}; 
Eve could learn all bits of the $\INFO$ string
using a trivial 
attack that induces no error on the bits tested by Alice 
and Bob (the $\TEST$ and
$\CTRL$ bits).
She would not measure the incoming qubit, but rather perform a cNOT from 
it into a $\ket{0^E}$ ancilla\footnote{By the term ``cNot from
$A$ into $B$'' we mean that $A$ is the control qubit and $B$ is the
target, as is commonly called; we prefer to use the term ``control qubit'' 
in a different meaning in our paper.}.
If Alice chose $Z$ and Bob decides to $\SIFT$ (i.e.\ measures in the $Z$-basis), 
she measures her ancilla and
obtains an exact copy of their common bit, thus inducing no
error on $\TEST$ bits and learning the $\INFO$ string.
If, however, Bob decides on $\CTRL$, i.e.\ reflects the qubit,
Eve would do another cNOT from the returning qubit into her ancilla. 
This would reset her ancilla, erase the interaction she performed, and
induce no error on $\CTRL$ bits, thus
removing any chance of her being caught. 
In the following Section we present two protocols which overcome this problem
via two different methods.

\section{Two Semi-Quantum Key Distribution Protocols}
The following two protocols remedy the above weakness
by not letting Eve know which is
a $\SIFT$ qubit (that can be safely measured in the computational basis)
and which is a $\CTRL$  qubit (that should be returned to Alice unchanged).
Both protocols are aimed at creating an $n$-bit $\INFO$ string to be
used as the seed for an $l$-bit shared secret key.

\subsection{Protocol~1: Randomization-based SQKD.}

Two versions are presented, both based on randomizing the returned qubits: Protocol~1 depends on a single
parameter $\delta > 0$ and is not completely-robust; Protocol~1$'$, 
with an additional parameter $\epsilon\leq 1$ 
such that $0\leq \epsilon < \delta$
and with Step~7$'$ replacing Step~7, 
is completely robust. 

Let $n$, the desired length of the $\INFO$ string, be an even integer and
let $\delta>0$ be
some fixed parameter.
\begin{enumerate}
\item Alice sends $N = \lceil8n(1+\delta)\rceil$ qubits.
For each of the qubits
she randomly selects whether to send it
in the computational basis ($Z$) or the Hadamard
basis ($X$). 
In each basis she sends random bits.
\item For each qubit arriving, Bob chooses randomly whether to measure it (to $\SIFT$ it)
or to reflect it ($\CTRL$). Bob reorders randomly
the reflected qubits so that no one, neither Alice nor Eve, 
could tell which of them were
reflected.
\label{step:bob-choice}
\item Alice collects the reflected qubits in a quantum memory\footnote{Quantum
memory is not strictly required, since instead of it (with a certain  
penalty to the protocol rate) Alice can 
measure each reflected qubit in a random basis.}.
\label{step:alice-collect}
\item Alice publishes which were her $Z$ bits. 
Bob publishes which were his $\CTRL$ qubits, and in which order they
were reflected;
Alice then measures all the returned $\CTRL$ qubits in the  
basis she prepared them.
\label{step:alice-compare}
\end{enumerate}
It is expected that for approximately $N/4$ bits, Alice used the $Z$ basis
and Bob chose to $\SIFT$ (these are the $\SIFT$ bits, which form the sifted key);
for approximately $N/4$ bits,
Alice used the $Z$ basis
and Bob chose $\CTRL$ (we refer to these bits as $Z$-$\CTRL$),
and for approximately $N/4$ bits,
Alice used the $X$ basis
and Bob chose $\CTRL$ (we refer to these bits as $X$-$\CTRL$).
In the rest of the bits, Bob expects a uniform distribution.
Cf. Fig~\ref{fig:protsum}.
\begin{enumerate}
\setcounter{enumi}{4}
\item
Alice checks the error-rate on the $\CTRL$ bits and if either
the $X$ error-rate or the $Z$ error-rate is higher than some
predefined threshold $P_{\CTRL}$ the protocol aborts.
\item Alice chooses
at random $n$ $\SIFT$ bits to be $\TEST$ bits.
She publishes which are the chosen bits.
Bob publishes the value of these $\TEST$ bits. 
Alice checks the error-rate on the $\TEST$ bits and if it is higher than some
predefined threshold $P_{\TEST}$ the protocol aborts. Else, let $v$ be the
string of the remaining $\SIFT$ bits.
\end{enumerate}
\begin{enumerate}
\setcounter{enumi}{6}
\item
Alice and Bob select the first $n$ 
bits in $v$ to be used as $\INFO$ bits.
If there is no errors or eavesdropping, 
Alice and Bob share the same string.
Otherwise, Bob's string is likely to differ from the the $\INFO$ 
string until corrected in Step~8 below.
\end{enumerate}
Unfortunately, Protocol 1 is not robust:
we will show how Eve can
count the number of ``$0$''s and ``$1$''s measured by Bob
(i.e.\ the Hamming weight of the measured string)
 without being
detectable and
get about $0.3$ bits of information on the $\INFO$ string, whatever
its length (and prove she can not do better). 
\begin{figure}
\begin{center}
\begin{tabular}{c@{~~}|@{~~}c@{~~}|@{~~}l@{~~}|@{~~}l}
A&B& Name &  Usage\\\hline
$Z$ &$\SIFT$&$\SIFT$& A pool for $n$ $\INFO$ and $n$ $\TEST$ bits\\
$X$ &$\SIFT$& &Bob expects a uniform distribution \\
$Z$ &$\CTRL$& $Z$-$\CTRL$ & Alice expects her values unchanged \\
$X$ &$\CTRL$& $X$-$\CTRL$ & Alice expects her values unchanged
\end{tabular}
\end{center}
\caption{Bit usage summary}
\label{fig:protsum}
\end{figure}

To make sure Eve cannot use statistics of occurrence of  
``$0$''s and ``$1$''s in the $\INFO$
string, Protocol~$1'$ will fix in advance a subset of $\{0,1\}^n$ to be used
for the $n$-bit $\INFO$ strings. 
A new parameter $\epsilon \leq 1$ such that $0\leq \epsilon < \delta$
is introduced and the set of  $\INFO$ strings is 
\begin{equation}\label{Inepsilondef}
I_{n,\epsilon} = \left\{ y \in \{0,1\}^n \mid  \left| \frac{|y|}{n} 
   -\frac{1}{2}\right| \leq\frac{\epsilon}{2} \right\} 
\end{equation}
where $|y|$ denotes the Hamming weight of $y$. 
When $\epsilon = 0$, $I_{n,0}$ is the set of $n$-bit 
strings with Hamming weight $n/2$;
for $\epsilon = 1$ (which can happen if $\delta > 1$), $I_{n,\epsilon} = \{0,1\}^n$.
We will prove that when $\epsilon > 0$, the information carried by a random $y \in I_{n,\epsilon}$
is exponentially close to $n$ bits (in the parameter $n$).
In that case, 
the set $I_{n,\epsilon}$ 
is a ``good set'' of $\INFO$ strings. 
When $\epsilon = 0$, $I_{n,0}$ has entropy of the order $n-0.5\log_2(n)$ bits. 

As for robustness, it is obtained by replacing Step 7 by Step 7$'$:
\begin{enumerate}
\setcounter{enumi}{6}
\item[7$'$\negthinspace.] 
\begin{enumerate}
\item Alice chooses a substring $x$ of $v$ of length $2h$ with
$h$ zeros and $h$ ones, where $h = \lfloor (1+\epsilon)n/2\rfloor$;
if she can not choose such a string, the protocol aborts.
\item Alice chooses randomly $y\in I_{n,\epsilon}$.
\item Alice chooses randomly a list of distinct indices $q_1\ldots q_n$ such that $x_{q_1}\ldots x_{q_n} = y$.
\item Alice announces publicly $q_1\ldots q_n$; Bob thus learns that $v_{q_1}\ldots v_{q_n}$ is the $\INFO$ string.
\end{enumerate}
\end{enumerate}
We will show that the protocol aborts with exponentially small probability and leaks no
information to Eve as long as she is undetectable.

\subsection{Protocol~2: Measure-Resend SQKD}


Our second protocol does not require Bob
to randomize the qubits as in Step~\ref{step:bob-choice}.
Instead, Bob either measures and resends the qubit ($\SIFT$s it)
or reflects it ($\CTRL$).
Furthermore, Alice does not need to delay the measurement of
the returning qubits until Step~\ref{step:alice-compare},
because immediately in Step~\ref{step:alice-collect} she knows in which
basis to measure.
 
The protocol is essentially the same
as the previous one, with steps 1 to 7,  
but with steps 2, 3 and 4 modified
to correspond to the new simplified sifting procedure; the modified steps are:
\begin{enumerate}
\item[2.] For each qubit arriving, 
Bob chooses randomly whether to measure 
and resend it in the same state he found (to $\SIFT$ it)
or to reflect it ($\CTRL$). 
Again, no one, neither Alice nor Eve, 
can tell which of the qubits were
reflected.
\item[3.] Alice measures each 
qubit in the basis she sent it.
\item[4.] Alice publishes which were her $Z$ bits and Bob publishes
which ones he chose to $\SIFT$.
\end{enumerate}

\subsection{Classical Post-Processing.}

The full protocol for the generation of the final key comprises 
any one of the above
``semi-quantum'' protocols, plus the ``classical'' step:
\begin{enumerate}
\setcounter{enumi}{7}
\item Alice publishes ECC \& PA data, from
which she and Bob extract the $l$-bit final key from the 
$\INFO$ string. 
\end{enumerate}
If the ECC is of rank $R$, 
publishing the ECC data entails publishing the parities of $n-R$ substrings
of the $\INFO$ string, i.e.\ up to $n-R$ bits of information on the $\INFO$ string.
This step must thus be excluded from the definition of
robustness or else no protocol would ever be robust unless the ECC is 
degenerate (of rank $n$) and unable to correct any error 
(the minimal distance being 1).
The $l$-bit key is chosen such that $l \leq R$ and it is the information
on this final $l$-bit key that needs to 
be proven negligible to prove the security of the above protocols.

\section{Proofs of Robustness}

We first show that Eve cannot obtain information on $\INFO$ bits in Protocol
2 without being detectable 
for the case in which the qubits are sent by Alice one by one as well as
as the case they are sent together. This is performed by considering the general case
in which Alice sends the qubits one by one but does not wait for a returning qubit before sending the
next one (so that Eve can collect the qubits and attack them collectively). The scenario 
analyzed in~\cite{BKM07} is a specific case of the setup we analyze here.
We then 
bound the information Eve can get with Protocol~1 without inducing errors on
$\TEST$ and $\CTRL$ bits and finally 
prove the complete robustness of Protocol~1$'$.

\subsection{Complete Robustness of Protocol~2.}
\label{sec:robust}

\subsubsection{Modeling the protocol.} Each time the protocol is executed,
Alice sends to Bob a state $\ket{\phi}$ which is a tensor product of $N$ qubits, each of which
is either $\ket{+}$, $\ket{-}$, $\ket{0}$ or $\ket{1}$; 
those qubits are indexed  from $1$ to $N$. 
Each of those qubits is either measured
by Bob in the standard basis and resent as it was measured or simply reflected. 
We denote $m$ the set of bit positions 
measured by Bob; this is a subset of $[1\,\ldots\,N]$ that we represent by
an increasing list of $r$ integer positions $m_1\ldots m_r$ corresponding to Bob
measuring the
$r$ qubits with index $m_1$, \ldots, $m_r$. For $i\in \{0,1\}^N$, we denote
\[
i_m = i_{m_1}i_{m_2}\ldots i_{m_r}
\]
 the substring of $i$ of length $r$ selected by
the positions in $m$;
of course $\ket{i_m} = \ket{ i_{m_1}i_{m_2}\ldots i_{m_r}}$.

In the protocol, it is assumed that Bob has no quantum register; 
he measures the  qubits as they come in. The physics
would however be exactly the same if Bob used a quantum register of $r$ qubits initialized
in state $\ket{0^B} = \ket{0^r}$ ($r$ qubits equal to $0$), applied the unitary transform
defined by\footnote{If $\ket{j}_B$ is Bob's register with $j\in \{0,1\}^r$,
then $M_m\ket{i}\ket{j}_B = \ket{i}\ket{i_m\oplus j}_B$
where $\oplus$ denotes a bitwise exclusive or.}
\begin{equation}
M_m \ket{i}\ket{0^B} = \ket{i}\ket{i_m}
\end{equation}
for $i\in\{0,1\}^N$, sent back $\ket{i}$ to Alice
 and postponed his measurement to be performed on that quantum register $\ket{i_m}$;
 the qubits indexed by $m$ in $\ket{i}$ are thus automatically both measured and
 resent, and those not in $m$
 simply reflected;
 the $k$th qubit sent by Alice is a $\SIFT$ bit if $k\in m$ and is either $\ket{0}$ or $\ket{1}$;
it is a $\CTRL$ bit if $k\notin m$.
This physically equivalent modified protocol
 simplifies the analysis and we shall thus model Bob's measurement and resending, or
 reflection, with $M_m$.  In most cases, Bob's measurement will be performed bitwise;
 for each $k$ in $m$ we will denote $M_k$ the unitary that performs an exclusive or
 between $k$-th qubit
 in $i$ and on the corresponding qubit $j_k$ in Bob's probe i.e.\ $M_k\ket{i_k}\ket{j_k} =
 \ket{i_k}\ket{j_k\oplus i_k}$. It follows that
 \[
 M_m = M_{m_r}\ldots M_{m_2}M_{m_1}.
 \]

\subsubsection{Eve's attack.}
The special case where all qubits go from Alice to Bob before coming back, which happens if they are
sent in parallel, was analyzed in \cite{BKM07}.
Eve's  most general attack is then comprised of two unitaries: $U_E$ attacking qubits as they
go from Alice to Bob and  $U_F$ as they go back from Bob to Alice, where $U_E$ and
$U_F$ share a common probe space with initial state $\ket{0^E}$.
The shared probe allows Eve to make the attack on the returning qubits depend
on knowledge acquired by $U_E$ (if Eve does not take advantage of that fact,
then the ``shared probe'' can simply be the composite system comprised of two 
independent probes). Any attack where Eve would make $U_F$ depend
on a measurement made after applying $U_E$ can be implemented by a unitaries
$U_E$ and $U_F$ 
with controlled gates
so as to postpone measurements; since we are giving Eve all the power of quantum
mechanics, the difficulty of building such a circuit is of no concern.
Eve can use at will a general-purpose quantum computer. 

The following (more general attack) is
possible if  Bob is expecting qubits in a sequence
yet Alice does not wait for a returning qubit before sending the next one.
Since Eve has access to a quantum memory, she can wait till she gets
all qubits $\ket{\phi}$ sent by Alice before proceeding.
Once she got them all, the most general attack she
can perform applies a unitary transform to $\ket{0^E}\ket{\phi}$, 
sends the first qubit to Bob, waits till it
comes back from Bob to then repeat the 
same action (with a possibly different unitary each time) 
for each qubit in a sequence. 
When Eve has attacked all qubits forth and back, 
she sends them back to Alice (one by one if needed).

More formally let $\mathscr{H}_P = \bigotimes_{k=1}^N \mathscr{H}_k$ be the space of the protocol, 
where each $\mathscr{H}_k$ is the two dimensional Hilbert space corresponding to the $k$-th qubit and 
let $\mathscr{H}_E$ be Eve's probe space; once Eve holds $\ket{\phi}$ she 
applies a unitary $U_1$ on $\ket{0^E}\ket{\phi}$ and sends Bob qubit 1
(corresponding to $\mathscr{H}_1$).
For each qubit $k$ from $1$ to $N-1$,
when Eve receives qubit $k$ back from Bob, she applies $U_{k+1}$ on $\mathscr{H}_E\otimes \mathscr{H}_P$
and then sends qubit $k+1$ to Bob. When Eve receives qubit $N$ from Bob, she applies $U_{N+1}$
on $\mathscr{H}_E\otimes\mathscr{H}_P$, sends the $N$ qubits to Alice and keeps her probe.
Eve's attack is thus characterized by a sequence $\{U_k\}_{1\leq k \leq N+1}$ of unitary transforms on
$\mathscr{H}_E\otimes\mathscr{H}_P$.

The attack in \cite{BKM07} where  Eve applies $U_E$ to all qubits, sends them to Bob, and applies $U_F$ on their
way back corresponds to the attack where  $U_1 = U_E$, $U_2 = \ldots = U_{N} = I$ and 
$U_{N+1} = U_F$ i.e.\ Eve uses $U_E$ on all qubits when she receives them, does nothing till she got
all qubits back and then applies $U_F$.

Another prococol, whose robustness can be proved 
with the methods of \cite{BKM07} and which is briefly 
mentioned in its conclusion
requires each qubit to be sent individually, Alice sending each qubit only when she
received the previous one from Bob.
Eve also uses a global probe initialized to $\ket{0^E}$ but she is forced to attack
qubits individually. For each qubit $k$ 
from 1 to $N$, Eve applies a unitary $U_E^{(k)}$ acting on  
$\mathscr{H}_E$ and $\mathscr{H}_k$\footnote{The transforms $U_E^{(k)}$ and $U_F^{(k)}$ act on
 $\mathscr{H}_E\otimes \ldots \otimes \mathscr{H}_k\otimes \ldots$ and leave
$\mathscr{H}_l$ fixed for $l\neq k$. } before sending it to Bob 
and applies a unitary $U_F^{(k)}$ acting on the same spaces
on the way back. The robustness of the individual-qubit protocol follows immediately from 
the robustness of Protocol 2 under the limited class of attacks where
$U_1 = U_E^{(1)}$,
$U_k = U_E^{(k)}U_F^{(k-1)}$ for $1\leq k < N$ and $U_{N+1} = U_F^{(N)}$ 
(and qubits are returned all together to Alice).

\subsubsection{The final global state.} 
Delaying all measurements allows considering the global state
of the Eve+Alice+Bob system before all actual measurements;
Eve's and Bob's actions are described by unitary transforms.
The initial state is $\ket{0^E}\ket{\phi}\ket{0^B}$; Eve's unitary
transforms $U_1$, \ldots, $U_{N+1}$ act on the first two Hilbert spaces whilst 
Bob's measurements $M_k$ performed when he receives qubit $k$ with $k\in m$
act on the last two spaces. For instance, if $N=4$ and $m=(1,3)$ then the final global
state of the system is $U_5U_4M_3U_3U_2M_1U_1\ket{0^E}\ket{i}\ket{0^B}$
where measurement $M_1$ on qubit 1 occurs immediately after Eve applies $U_1$ and measurement
$M_3$ on qubit 3 occurs immediately after Eve applies $U_3$.
 
The attacks $\{U_k\}_{1\leq k \leq N+1}$ we are interested in are only those
for which Eve is completely undetectable. Such attacks put strong restrictions
on the global evolution of the system. In what follows, when we say that an attack
 induces no error on $\CTRL$ and $\TEST$, we mean
that for any choice of $\CTRL$ and $\TEST$  bits 
whose probability of occurrence according to protocol 2 is not 0, the probabililty
that Eve's attack induces an error on them is $0$.

\begin{prop}\label{propositiondefeideffi}
If the attack $\{U_k\}_{1\leq k \leq N+1}$ induces no error on $\TEST$  and $\CTRL$ bits,
and if Alice sent state $\ket{i}$ with $i\in \{0,1\}^N$, then there is a state 
$\ket{F_i} \in \mathscr{H}_E$ such that, for all $m$, the final global state of the system
after applying $U_{N+1}$ is
\begin{equation}\label{summerize0}
\ket{F_i}\ket{i}\ket{i_m}.
\end{equation}
\end{prop}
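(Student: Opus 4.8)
The plan is to prove the proposition by a hybrid argument that peels off Bob's measurements one qubit at a time and uses the ``no error on $\CTRL$ and $\TEST$'' hypothesis in two complementary ways: for positions $k\notin m$ it forces Eve's interaction around that $\CTRL$ qubit to leave it essentially untouched in every basis, and for positions $k\in m$ it forces Eve's state to be perfectly correlated with Alice's bit $i_k$. First I would set up notation: fix $i\in\{0,1\}^N$ and define the partial state $\ket{\Psi_t}$ obtained by applying the first $t$ of Eve's unitaries interleaved with whichever $M_k$ ($k\le t$, $k\in m$) occur, so that $\ket{\Psi_0}=\ket{0^E}\ket{i}\ket{0^B}$ and the final state is $\ket{\Psi_{N+1}}$. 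The claim I would actually induct on is the stronger one that, \emph{for a suitable family of states $\ket{e_{i,t}}\in\mathscr H_E$}, we have $\ket{\Psi_t}=\ket{e_{i,t}}\ket{i}\ket{(i_m)_{\le t}}$, where $(i_m)_{\le t}$ records the bits Bob has measured among the first $t$ positions; the proposition is the $t=N+1$ case with $\ket{F_i}=\ket{e_{i,N+1}}$.

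The inductive step splits into the three things that can happen at stage $t$. If $U_t$ is applied and then nothing else, I must show $U_t$ sends $\ket{e_{i,t-1}}\ket{i}\ket{w}$ to a product $\ket{e_{i,t}}\ket{i}\ket{w}$ for \emph{every} classical $i$ that is consistent with some nonzero-probability choice of bases, $\CTRL/\SIFT$, and $\TEST$. The key device here is the one already used for BB84-style robustness: consider two states $\ket{i}$ and $\ket{i'}$ differing only in a $\CTRL$ position $k$ that Alice prepared in the $Z$ basis; undetectability on $Z$-$\CTRL$ says $U_t$ (together with the remaining unitaries, which act trivially on $\mathscr H_k$ until return) must return qubit $k$ unchanged, hence cannot move amplitude of $\ket{0}_k$ onto $\ket{1}_k$ or vice versa. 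Doing the same with the $\ket{+},\ket{-}$ preparations of an $X$-$\CTRL$ qubit and taking linear combinations forces $U_t$ to act as identity on $\mathscr H_k$ \emph{and} to produce a probe factor that does not depend on $i_k$ — this is exactly the ``no imprint'' conclusion that the text flags as the moral content of robustness. For a $\SIFT$ position $k\in m$ the $M_k$ is applied right after $U_t$; undetectability of the $\TEST$ bits (Bob's measured value must equal Alice's) forces the post-$M_k$ state to be supported on $\ket{i_k}_k\otimes\ket{i_k}_{B}$, and combined with the $\CTRL$-type argument applied to the \emph{other} positions one concludes the state again factors as $\ket{e_{i,t}}\ket{i}\ket{(i_m)_{\le t}}$. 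The reflection case ($U_t=I$ or $k\notin m$, $M$ not applied) is immediate.

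The main obstacle I expect is bookkeeping the ``for all $m$'' quantifier correctly: at stage $t$ Eve does not yet know $m$, so the \emph{same} $U_t$ must do the job simultaneously for every $m$ consistent with the first $t-1$ choices, and one has to be careful that the $\CTRL$-undetectability constraints one invokes really do correspond to nonzero-probability base/control choices (for position $k$ with $k\le t$ the choice has effectively been ``made'' in the sense that $M_k$ has or has not been applied, but for $k>t$ it is still open, and both $\ket{i}$ and $\ket{i'}$ used in the comparison must remain admissible). Concretely I would argue as follows: fix any $m$; for that $m$, positions in $m$ are $\SIFT$ and positions outside are $\CTRL$, all of this a nonzero-probability event; run the hybrid with $M_k$ applied exactly for $k\in m$; at each $U_t$ invoke the $X$- and $Z$-$\CTRL$ undetectability \emph{for the $\CTRL$ positions of this particular $m$} and $\TEST$-undetectability for a nonzero-probability $\TEST$ subset among $m$, to pin down $U_t$'s action up to the probe; since $U_t$ is one fixed operator, the resulting decomposition $\ket{e_{i,t}}\ket{i}\ket{(i_m)_{\le t}}$ holds for that $m$, and since $m$ was arbitrary it holds for all $m$ — with the same $\ket{F_i}=\ket{e_{i,N+1}}$ because the probe factor was shown to be independent of which qubits were $\CTRL$ vs.\ $\SIFT$. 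A secondary subtlety is making the $X$-basis argument rigorous: one compares the four preparations $\ket0,\ket1,\ket+,\ket-$ at position $k$, uses linearity of $U_t$ to reduce the $X$ statements to statements about the $Z$ amplitudes, and deduces both that the reflected qubit is unchanged and that the emitted probe is $i_k$-independent; this is the step where a short explicit $2\times2$ computation is unavoidable, but it is routine and I would relegate it to the proof body rather than the sketch.
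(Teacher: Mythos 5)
There is a genuine gap, and it sits in the very first move. Your inductive invariant --- that after each stage $t$ the state is a product $\ket{e_{i,t}}\ket{i}\ket{(i_m)_{\le t}}$, with each $U_t$ forced to act as the identity on the $\CTRL$ qubits and to emit an $i_k$-independent probe factor --- does not follow from the hypothesis. The hypothesis only constrains the \emph{final} measurement outcomes (Bob's recorded values and Alice's measurements of the returned qubits); it says nothing about intermediate states, and a single $U_t$ is never constrained in isolation, only the full interleaved product $U_{N+1}\cdots M_k\cdots U_1$ is. Concretely: let $U_1$ act on $\mathscr{H}_E\otimes\mathscr{H}_2$ by rotating qubit $2$ into a state entangled with the probe, and let $U_2$ undo this rotation before qubit $2$ is forwarded to Bob (or, for a qubit Eve knows she can repair later, let $U_{N+1}$ undo it before the qubits return to Alice). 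This attack induces no error on $\TEST$ or $\CTRL$ bits for any $m$, yet the state after stage $1$ is not of your product form and $U_1$ is certainly not the identity on $\mathscr{H}_2$. The same conflation undermines your $X$-basis step: comparing $\ket{+}$ and $\ket{-}$ preparations constrains only Alice's final measurement, so it cannot ``pin down $U_t$'s action'' qubit by qubit. (That comparison is in fact the engine of the \emph{next} proposition, about $i$-independence of $\ket{F_i}$; for the present proposition, which fixes a classical $i\in\{0,1\}^N$, only $Z$-basis checks are needed.)

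Because the per-step claims fail, the part of the statement that carries all the weight --- that one and the same $\ket{F_i}$ works \emph{for all} $m$ --- is never actually established in your sketch; you obtain it only as a corollary of the unprovable assertion that the probe factor is independent of which qubits were $\SIFT$ versus $\CTRL$. The paper avoids induction altogether: it expands the final state as $\sum_{jj'}\ket{E_{ijj'}}\ket{j}\ket{j'}$, uses no-error-on-$\TEST$ to kill all $j'\neq i_m$ and no-error-on-$\CTRL$ (plus Alice's check of the resent $\SIFT$ qubits, legitimate since $i$ is classical) to kill all $j\neq i$, and then removes the residual $m$-dependence of $\ket{E_{iii_m}}$ with an operator identity: the map $Z$ that resets Bob's register to $\ket{0^B}$ satisfies $ZU_k=U_kZ$ and $ZM_k=Z$, so applying $Z$ to the final state for any $m$ yields the final state for empty $m$, forcing $\ket{E_{iii_m}}=\ket{F_i}$ for every $m$. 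If you want to keep a stage-by-stage argument, you would need an invariant that is actually implied by final-outcome constraints (or to import something like the $Z$-map commutation trick); as written, the induction cannot get off the ground.
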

\begin{proof}

The final global state of the system can always be written as
$
\sum_{jj'} \ket{E_{ijj'}} \ket{j}\ket{j'}
$
where $\ket{j}$ is the standard basis of $\mathscr{H}_P$ and $\ket{j'}$ of Bob's probe space;
If the protocol induces no errors on $\TEST$ bits, it must be so that for all $m$, 
$\ket{E_{ijj'}} = 0$ for $j' \neq i_m$ and thus the final global state must be
$
\sum_j \ket{E_{iji_m}}\ket{j}\ket{i_m}.
$
Moreover, if there is no error on $\CTRL$ bits, then the probability for Alice to
measure any $\ket{j}$ that is not $\ket{i}$ must be zero. She can indeed choose
any qubit not in $m$ as a {$Z$-$\CTRL$} bit; she also checks all the qubits measured
by Bob, which must also coincide with those she sent since $i\in \{0,1\}^N$.
Consequently $\ket{E_{iji_m}} = 0$ if $j\neq i$ and the final state must be
$
\ket{E_{iii_m}}\ket{i}\ket{i_m}.
$

We now prove that $\ket{E_{iii_m}}$ does not depend on $i_m$. Let $Z$ be the linear
map defined by $Z\ket{e}\ket{j}\ket{j'} = \ket{e}\ket{j}\ket{0^B}$ i.e.\ $Z$ is the linear  map on 
Bob's probe space that maps its standard basis states on the state $\ket{0^B}$.
It is clear that $ZU_k = U_kZ$ and $ZM_k = Z$ for all $k$.
If we look at the particular case where $N=4$ and 
$m=(1,3)$, i.e.\ Bob measures qubits 1 and 3, this implies that
$ZU_5U_4M_3U_3U_2M_1U_1\ket{0^E}\ket{i}\ket{0^B} =
U_5U_4U_3U_2U_1Z\ket{0^E}\ket{i}\ket{0^B} = U_5U_4U_3U_2U_1\ket{0^E}\ket{i}\ket{0^B}$.
Applying $Z$ to the final state just gives the final state obtained if $m$ is empty.
If we apply $Z$ to $\ket{E_{iii_m}}\ket{i}\ket{i_m}$ we get $\ket{E_{iii_m}}\ket{i}\ket{0^B}$ and
this state must be equal to the final global state when $m$ is empty. This implies that for all
values of $m$, the states $\ket{E_{iii_m}}$ must be the same; we call them $\ket{F_i}$
and this gives $
\ket{F_i}\ket{i}\ket{i_m}
$
as the final global state.
Note that the Eve's state $\ket{F_i}$ is not entangled with the system $\ket{i}$ sent back to Alice,
nor with Bob's register $\ket{i_m}$.
\end{proof}

We now show that if Eve's attack is undetectable by Alice and Bob, then
Eve's final state $\ket{F_i}$ 
is independent
of the string $i\in \{0,1\}^N$. More precisely,
\begin{prop}\label{propprot2}
If  $\{U_k\}_{1\leq k \leq N+1}$ is an attack on Protocol~2  that
 induces no error on $\TEST$ and $\CTRL$ bits, then
  for all $i, i' \in \{0,1\}^N$,
 \begin{equation}
 i,i'\in \{0,1\}^N \quad \implies \quad \ket{F_i} = \ket{F_{i'}}.
 \end{equation}
\end{prop}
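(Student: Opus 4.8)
The plan is to use the one test we have not yet exploited, the $X$-$\CTRL$ test, together with linearity of the global evolution. First I would reduce to the case where $i$ and $i'$ differ in exactly one coordinate: if $\ket{F_i}=\ket{F_{i'}}$ holds whenever $i$ and $i'$ differ in a single bit, the general statement follows by walking from $i$ to $i'$ one bit-flip at a time. So fix $i,i'$ that agree everywhere except at position $k$, with $i_k=0$ and $i'_k=1$.

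Next I would consider the positive-probability run of Protocol~2 in which Alice prepares qubit $k$ in the $X$ basis as $\ket{+}$ and prepares every other qubit $l$ in the computational state $\ket{i_l}$, while Bob reflects qubit $k$ (so that $k\notin m$) and measures exactly the positions in $m$; for this run qubit $k$ is an $X$-$\CTRL$ bit and each $l\neq k$ is a $Z$-$\CTRL$, $\SIFT$ or $\TEST$ bit. Let $V_m$ denote the fixed unitary on $\mathscr{H}_E\otimes\mathscr{H}_P\otimes\mathscr{H}_B$ obtained by composing all of Eve's $U_j$ with all of Bob's $M_j$ ($j\in m$) in the correct temporal order (as in the $N=4$ example above). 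Writing $\ket{+}=\tfrac{1}{\sqrt2}(\ket0+\ket1)$, the state Alice sends equals $\tfrac{1}{\sqrt2}(\ket{i}+\ket{i'})$. Since $i,i'\in\{0,1\}^N$, Proposition~\ref{propositiondefeideffi} applies to each branch separately, and because $k\notin m$ and $i,i'$ agree off $k$ we have $i_m=i'_m$; hence by linearity the final global state of this run is
\[
\tfrac{1}{\sqrt2}\big(\ket{F_i}\ket{i}\ket{i_m}+\ket{F_{i'}}\ket{i'}\ket{i_m}\big),
\]
which factors as $\ket{a}\ket{i_m}_B\otimes\tfrac{1}{\sqrt2}\big(\ket{F_i}\ket0_k+\ket{F_{i'}}\ket1_k\big)$, where $\ket{a}$ is the common state of the qubits other than $k$ and the last factor lives in $\mathscr{H}_E\otimes\mathscr{H}_k$.

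Finally I would invoke the hypothesis that the attack induces no error: Alice measures qubit $k$ in the $X$ basis and must obtain outcome $+$ with probability $1$. Its reduced state is $\rho_k=\tfrac12\big(\ket0\bra0+\ket1\bra1+\braket{F_{i'}}{F_i}\ket0\bra1+\braket{F_i}{F_{i'}}\ket1\bra0\big)$, using $\braket{F_i}{F_i}=\braket{F_{i'}}{F_{i'}}=1$ (which hold because $V_m$ is unitary and $\ket{i}\ket{i_m}$ is a unit vector). The no-error condition $\braOket{-}{\rho_k}{-}=0$ becomes $\tfrac12\big(1-\mathrm{Re}\braket{F_i}{F_{i'}}\big)=0$, so $\mathrm{Re}\braket{F_i}{F_{i'}}=1$; together with $\|F_i\|=\|F_{i'}\|=1$ and Cauchy--Schwarz this forces $\braket{F_i}{F_{i'}}=1$, hence $\ket{F_i}=\ket{F_{i'}}$, and the general case follows by the reduction above. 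I expect the only step with real content to be this last computation; everything else is bookkeeping with linearity and the tensor structure. The point to handle with care is the factorization of the final state into ``(everything except qubit $k$) $\otimes$ (Eve $+$ qubit $k$)'', which uses both that $i$ and $i'$ agree away from $k$ and that $k\notin m$, so that Bob's register carries the same value $i_m$ in the two branches.
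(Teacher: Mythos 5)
Your proof is correct and follows essentially the same route as the paper's: send $\ket{+}$ at position $k$ with all other qubits in the $Z$-basis, have Bob reflect qubit $k$, apply Proposition~\ref{propositiondefeideffi} branchwise with linearity, and use the vanishing probability of the outcome $\ket{-}$ on that $X$-$\CTRL$ qubit to force $\ket{F_i}=\ket{F_{i'}}$, then extend bit by bit. The only (cosmetic) difference is that you pass through the reduced density matrix $\rho_k$ and Cauchy--Schwarz, whereas the paper reads off $\tfrac12\bigl[\ket{F_i}-\ket{F_{i'}}\bigr]=0$ directly from the $\ket{-}$ component of the joint Eve-plus-qubit state.
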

\begin{proof}
For any index $k$, 
let Alice's $k$-th qubit be in state $\ket{+}$, and all the other qubits
be prepared in the $Z$-basis.
Alice's state can be written $\frac{1}{\sqrt{2}}[\ket{i} + \ket{i'}]$
where  $i, i' \in \{0,1\}^N$, 
$i_k=0$, $i'_k=1$, and $i_t = i'_t$ for $t\neq k$. 
Let Bob choose $m$ such that $k\notin m$; such an $m$ exists because $N\geq 2$
and then $i_m = i'_m$. 
By the previous proposition and linearity, the final global state is
$\frac{1}{\sqrt{2}}\left[ \ket{F_i}\ket{i} + \ket{F_{i'}}\ket{i'}\right]\ket{i_m}$;
since we are interested only in 
Alice's $k$-th qubit, we trace-out all the other qubits
in Alice and Bob's hands and get the state
\[
\frac{1}{\sqrt{2}}\left[\ket{F_i}\ket{0} + \ket{F_{i'}}\ket{1} \right];
\]
if $\ket{0}$ and $\ket{1}$ are 
replaced by their values in term of $\ket{+}$ and $\ket{-}$, this
rewrites
$
\frac{1}{2}\Big[\ket{F_i}+\ket{F_{i'}}\Big] \ket{+} + \frac{1}{2}\Big[\ket{F_i}-\ket{F_{i'}}\Big]\ket{-}
$
and since the probability that Alice measures $\ket{-}$ must be~$0$, 
$\frac{1}{2}\Big[\ket{F_i}-\ket{F_{i'}}\Big]=0$ i.e.\
$\ket{F_i} = \ket{F_{i'}}$.
The above holds for any $l$; 
any bit in $i$ can be flipped without affecting $\ket{F_i}$ and thus
$\ket{F_i}$ is the same for all $i\in \{0,1\}^N$.
\end{proof}

\begin{theorem}\label{thmrobustness2}
 For any attack $\{U_k\}_{1\leq k\leq N+1}$ on Protocol~2 that induces no error
on $\TEST$ and $\CTRL$ bits, Eve's final state is independent of the state
$\ket{\phi}$ sent by Alice, and Eve has thus no information on the $\INFO$ string.
\end{theorem}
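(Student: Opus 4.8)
The plan is to reduce the general statement to the two preceding propositions purely by linearity. First I would observe that the state $\ket{\phi}$ that Alice actually sends is, by the modeling of the protocol, a tensor product of $N$ qubits each equal to $\ket{0},\ket{1},\ket{+}$ or $\ket{-}$; expanding the Hadamard-basis factors in the computational basis gives $\ket{\phi}=\sum_{i\in\{0,1\}^N}\alpha_i\ket{i}$ for suitable amplitudes $\alpha_i$. All of Eve's operations $U_1,\ldots,U_{N+1}$ act linearly on $\mathscr{H}_E\otimes\mathscr{H}_P$, and Bob's ``measure-and-resend or reflect'' action is, in the physically equivalent measurement-deferred model, the unitary $M_m$ on $\mathscr{H}_P\otimes\mathscr{H}_B$; hence the whole evolution is linear, and the final global state on input $\ket{0^E}\ket{\phi}\ket{0^B}$ is $\sum_i\alpha_i$ times the final global state on input $\ket{0^E}\ket{i}\ket{0^B}$.

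Next I would invoke Proposition~\ref{propositiondefeideffi}: since the attack $\{U_k\}$ induces no error on $\TEST$ and $\CTRL$ bits, the final global state on input $\ket{i}$ for Bob's choice $m$ is $\ket{F_i}\ket{i}\ket{i_m}$, and by Proposition~\ref{propprot2} all the states $\ket{F_i}$ coincide; call the common state $\ket{F}$. By the linearity step, the final global state on input $\ket{\phi}$ is therefore
$\sum_i\alpha_i\ket{F_i}\ket{i}\ket{i_m}=\ket{F}\otimes\big(\sum_i\alpha_i\ket{i}\ket{i_m}\big)=\ket{F}\otimes M_m\ket{\phi}\ket{0^B}$,
a product state across the cut separating Eve's probe from the composite of the qubits returned to Alice and Bob's register. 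In particular Eve's reduced density matrix is $\ket{F}\bra{F}$, independent of $\ket{\phi}$ and of $m$.

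Finally I would translate this into the information claim. Averaging over Alice's random preparation $\ket{\phi}$ and Bob's random choice of $m$, the joint state of Eve's probe together with everything in Alice's and Bob's hands — including the classical records produced in Steps~1--7 (the announced $Z$ positions, the announced $\SIFT$ positions, the $\TEST$ positions and the $\TEST$ values) — has the product form $\ket{F}\bra{F}\otimes\rho_{AB}$. Hence Eve's probe is uncorrelated with the $\INFO$ string even conditioned on all public classical communication, so any measurement she performs on it yields outcome statistics independent of the $\INFO$ string and her mutual information with it is zero.

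The only place that needs care is the first step: one must be certain that ``measure and resend'' (and ``reflect'') genuinely act linearly on the global state, which is exactly what the measurement-deferred reformulation through the unitary $M_m$ provides, and that the no-error hypothesis, which is assumed for every admissible preparation, already licenses the use of both propositions (their proofs use single-qubit $\ket{+}$ preparations and all-$Z$ preparations). Everything after that is bookkeeping: linearity, the factorization $\ket{F}\otimes M_m\ket{\phi}\ket{0^B}$, and the remark that a fixed pure probe state carries no information.
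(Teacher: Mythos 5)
Your proposal is correct and follows essentially the same route as the paper's own proof: it invokes Proposition~\ref{propositiondefeideffi} and Proposition~\ref{propprot2} to get the final state $\ket{F_\final}\ket{i}\ket{i_m}$ with a common probe state, and then extends to arbitrary superpositions $\ket{\phi}=\sum_i c_i\ket{i}$ by linearity, so that Eve's probe factors out and carries no information on the $\INFO$ string. The extra bookkeeping you add (identifying the factored state with $M_m\ket{\phi}\ket{0^B}$ and spelling out the conditioning on the public classical data) is a harmless elaboration of the same argument.
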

\begin{proof}
By Proposition~\ref{propprot2}, there is a state $F_\final$ of Eve's probe space
such that for all $i\in \{0,1\}^N$, Eve's final state
$\ket{F_i} = \ket{F_\final}$. By Proposition~\ref{propositiondefeideffi},
for all $i\in \{0,1\}^N$ and all $m$, the final state after applying $U_{N+1}$ if Alice sends 
$\ket{i}$ is thus
$\ket{F_\final}\ket{i}\ket{i_m}$. For all superpositions $\ket{\phi} = \sum_i c_i\ket{i}$ that
Alice may send, and all $m$, the final state of the
Eve+Alice+Bob system after applying $U_{N+1}$ is consequently
\begin{equation}
\ket{F_\final} \sum_i c_i\ket{i}\ket{i_m};
\end{equation}
Eve's probe state  $\ket{F_\final}$ is independent of 
$i_m$ and therefore of the $\SIFT$ bits and $\INFO$ bits --- if Eve is to be undetectable.
\end{proof}

The above theorem means that Protocol~2 is completely robust.

\subsection{Partial robustness of Protocol~1.}

\subsubsection{Modeling the protocol.} The states $\ket{\phi}$ sent by Alice are still
products of $N$ qubits each of which is either $\ket{+}$, $\ket{-}$, $\ket{0}$ or $\ket{1}$.
In Step~2 of the protocol, Bob either measures a qubit, or reflects it; moreover, he reorders
randomly the reflected qubits; let $r$ be the number of reflected qubits and let
$s = s_1s_2\ldots s_r$ be the list of those $r$ randomly ordered bit positions.
For instance, if $r=4$, and Bob reflects qubits $8$, $1$, $5$ and $4$ in that order then
$s = 8154$ (examples will use positions from $1$ to $9$ to avoid comma separated
lists). The list of non-reflected bits is indexed by the complement $\bar{s}$ and will always
be listed in ascending order; if $N=9$ and $s=8154$ then $\bar{s} = 23679$.
Bob's measurement can still be postponed, but this time, since Bob keeps the qubits selected
by $\bar{s}$ without sending a copy, there is no need to copy. 
For all string $s$ we still denote $i_s = i_{s_1}\ldots i_{s_r}$ the list of bits selected by $s$
in the order specified by $s$; 
Bob's operation is then captured by
\[
U'_s\ket{i} = \ket{i_s}\ket{i_{\bar{s}}}
\]
where $\ket{i_s}$ is the state reflected to Alice, and $\ket{i_{\bar{s}}}$ the state (to be) measured
by Bob. With $N=9$ and $s=8154$, and if Alice sent
$\ket{i_1\ldots i_9}$ with $i_1,\ldots, i_9\in \{0,1\}$,
the state reflected is $\ket{i_8i_1i_5i_4}$ and the
state to be measured $\ket{i_2i_3i_6i_7i_9}$. Of course, Alice can compare $i_s$ with
what she actually sent only when $s$ is known and consequently keeps $\ket{i_s}$
in quantum memory. With these notations, qubit $k$ is $\CTRL$ if $k\in s$ and it is $\SIFT$ if
it is either $\ket{0}$ or $\ket{1}$ and $k\notin s$.

\subsubsection{Eve's attack.}
Eve's  most general attack is still comprised of two unitaries: $U_E$ and $U_F$
sharing a common probe space; $U_E$ is applied on
$\ket{0^E}$ and $\ket{\phi}$ and attacks qubits as they
go from Alice to Bob;  $U_F$ is applied on Eve's probe and $\ket{i_s}$ as 
those bits go back from Bob to Alice; one slightly annoying problem is that the dimension
of the space on which $U_F$ acts is not fixed; it depends on the size of $s$, i.e.
the number of bits reflected by Bob; there is thus one unitary 
$U_F$  for each $r>0$.

\subsubsection{The global final state.}
Since Bob uses no probe space, the global state after Eve applies $U_E$ is simply
$U_E\ket{0^E}\ket{\phi}$; then Bob applies $U'_s$ to his part of the system, which corresponds
to the global unitary $I_E\otimes U'_s$ where $I_E$ is the identity on Eve's probe space.
Then $U_F$ is applied only on Eve's probe and $\ket{i_s}$; if we denote $I_{\bar{s}}$ the
identity on the system left in Bob's hands, given by the qubits selected by $\bar{s}$, the
final global state is then
\begin{equation}\label{globalproto1}
[U_F\otimes I_{\bar{s}}][ I_E\otimes U'_s] U_E\,\ket{0^E}\ket{\phi}.
\end{equation}

\begin{prop}\label{propprot1}
If  $(U_E, U_F)$ is an attack on Protocol~1 such that $U_E$
 induces no error on $\TEST$ bits 
 then there are
states $\ket{E_i}$ in Eve's probe space such that for all $i\in\{0,1\}^N$,
\begin{equation}\label{defei1}
U_E\ket{0^E}\ket{i} = \ket{E_i}\ket{i}.
\end{equation}
If moreover  $ U_F$ induces no error on $\CTRL$ bits, then there are states
$\ket{F_{s,i}}$ of Eve's probe space such that for all $i\in \{0,1\}^N$, and all sequence $s$
of distinct elements of $[1\,..\,N]$,
\begin{equation}\label{deffs}
U_F\ket{E_i}\ket{i_s} = \ket{F_{s,i}}\ket{i_s}.
\end{equation}
\end{prop}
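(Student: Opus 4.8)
The plan is to run the same orthogonality-in-the-standard-basis argument used in the first half of the proof of Proposition~\ref{propositiondefeideffi}, adapted to the two features that distinguish Protocol~1 from Protocol~2: Bob \emph{keeps} the qubits he measures (so there is nothing to resend and no register to track), and he \emph{reorders} the qubits he reflects (harmless because $U_F$ may depend on the list $s$ that Bob announces).

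For~(\ref{defei1}), expand $U_E\ket{0^E}\ket{i}=\sum_{j\in\{0,1\}^N}\ket{E_{ij}}\ket{j}$ in the standard basis of $\mathscr{H}_P$ and consider the run in which Alice prepares every qubit in the $Z$ basis --- so her state is some $i\in\{0,1\}^N$ --- and Bob measures all of them, i.e.\ $s=\emptyset$. This run has nonzero probability, it reaches Step~6, and all $N$ qubits are $\SIFT$ bits; since $N=\lceil 8n(1+\delta)\rceil>n$, each position has nonzero probability of being one of the $n$ selected $\TEST$ bits. Hence the hypothesis that $U_E$ induces no error on $\TEST$ forces, for every position $k$, Bob's outcome on qubit $k$ to equal $i_k$ with certainty; equivalently $\ket{E_{ij}}=0$ whenever $j\neq i$, so setting $\ket{E_i}:=\ket{E_{ii}}$ yields~(\ref{defei1}).

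For~(\ref{deffs}), assume in addition that $U_F$ induces no error on $\CTRL$ bits, fix $i\in\{0,1\}^N$ and a sequence $s=s_1\ldots s_r$ of distinct positions, and consider the run in which Alice again sends $\ket{i}$ and Bob reflects exactly the qubits $s_1,\ldots,s_r$ in that order, a choice of nonzero probability by Step~2. By~(\ref{defei1}) and the definition of $U'_s$, the global state entering $U_F$ is
\begin{equation*}
[I_E\otimes U'_s]\,U_E\ket{0^E}\ket{i}=\ket{E_i}\ket{i_s}\ket{i_{\bar{s}}},
\end{equation*}
where $\ket{i_{\bar{s}}}$ is the already-classical block Bob keeps and $\ket{i_s}$ the block of $r$ qubits sent back through Eve. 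Writing $U_F\ket{E_i}\ket{i_s}=\sum_{t\in\{0,1\}^r}\ket{F'_{s,i,t}}\ket{t}$, the final global state~(\ref{globalproto1}) becomes $\sum_t\ket{F'_{s,i,t}}\ket{t}\ket{i_{\bar{s}}}$. Since Bob has announced $s$, Alice measures the returned block in the $Z$ basis and expects outcome $i_s$ (these are $Z$-$\CTRL$ bits), so the no-error-on-$\CTRL$ hypothesis forces $\ket{F'_{s,i,t}}=0$ for $t\neq i_s$; setting $\ket{F_{s,i}}:=\ket{F'_{s,i,i_s}}$ gives~(\ref{deffs}).

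The only point that needs care --- and the one I would write out most explicitly --- is the bookkeeping that each run invoked above occurs with nonzero probability and actually reaches the test whose zero threshold is being invoked (the all-$Z$/measure-everything run for the $\TEST$ constraint, the all-$Z$/reflect-exactly-$s$ run for the $\CTRL$ constraint); this is where the explicit value of $N$ and the randomness of Bob's and Alice's choices enter, and the reordering of reflected qubits causes no additional trouble. Note that, in contrast to Protocol~2, these $Z$-basis constraints are as far as one can push here: there is no counterpart of the step that made Eve's final state independent of $i$ --- the $X$-$\CTRL$ bits will constrain $\ket{F_{s,i}}$ further in the partial-robustness analysis that follows, but, Protocol~1 being only partly robust, not all the way to independence.
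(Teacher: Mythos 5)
Your proof is correct and follows essentially the same route as the paper: expand $U_E\ket{0^E}\ket{i}$ in the standard basis and use the nonzero probability of each position being a $\TEST$ bit to kill the $j\neq i$ components, then use the no-error condition on the returned ($Z$-$\CTRL$) block to force $U_F\ket{E_i}\ket{i_s}$ into product form with $\ket{i_s}$. The only cosmetic difference is that where you expand the returned block in the computational basis and discard zero-probability outcomes, the paper argues via the purity of Alice's reduced state (Schmidt decomposition); the two are interchangeable here.
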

\begin{proof}
$U_E\ket{0^E}\ket{i}$ can be expanded as $\sum_j \ket{E_{ij}}\ket{j}$ and since for any $k$ there must
be a $0$ probability of getting $j_k$ different from $i_k$ 
(there is a non zero probability that Bob chooses bit $k$ as a $\TEST$ bit), $\ket{E_{ij}} = 0$ for
$j\neq i$ and thus (\ref{defei1}) holds with $\ket{E_i} = \ket{E_{ii}}$.
In Step~4, Bob publishes the
bit positions $s$ and, for Eve's attack to be unnoticeable by Alice, the state held by Alice
after $U_F$ is applied to $\ket{E_i}\ket{i_s}$ needs to be equal to $\ket{i_s}$. 
By Hilbert-Schmidt, this implies that the bipartite state $U_F\ket{E_i}\ket{i_s}$ must
be of the form $\ket{F}\ket{i_s}$. The pure state $\ket{F}$ depends here on
$i$, both through $\ket{E_i}$ and $i_s$, and also on the string $s$ chosen to 
select the reflected qubits, i.e.\ $\ket{F}$ is a function $i$ and $s$ and will be
written $\ket{F_{s,i}}$, giving Eq.~(\ref{deffs}).
\end{proof}

When the attack $(U_E,U_F)$ induces no error on $\TEST$ and $\CTRL$ bits then,
using (\ref{globalproto1}), (\ref{defei1}) and (\ref{deffs}),
\begin{equation}\label{summerize1}
[U_F\otimes I_B][ I_E\otimes U'_s] U_E\ket{0^E}\ket{i} = \ket{F_{s,i}}\ket{i_s}\ket{i_{\bar{s}}}.
\end{equation}

One can no longer expect Eve's final state  $\ket{F_{s,i}}$ after Alice sent state $\ket{i}$ and
Bob reflected the qubits specified by $s$ to be  constant, as 
is shown in the following example:
\begin{example}\label{example1}
 Let Eve's probe space be of dimension $N+1$ with basis states
$\ket{0}$ \ldots $\ket{N}$. Eve's initial state is $\ket{0}$.
 Let $U_E\ket{0}\ket{i} = \ket{|i|}\ket{i}$ and $U_F\ket{h}\ket{j} = 
\ket{h-|j|} \ket{j}$.
This means that $U_E$ puts in the probe  the Hamming weight $h=|i|$
of the string $i\in \{0,1\}^N$
if Alice sends state $\ket{i}$, and
$U_F$ subtracts from the probe the Hamming weight of the string $\ket{j}$
returned by Bob.
In particular $\ket{F_{s,i}} = \ket{|i| - |i_s|} = \ket{|i_{\bar{s}}|}$.  For $U_F$ to be
defined on all basis states assume the difference is modulo $N+1$.
Bob can clearly detect no error on $\TEST$ bits. Moreover, if
Alice sends $\ket{\phi} = \sum_i c_i \ket{i}$, the final state is
$\sum_i c_i \ket{|i_{\bar{s}}|}\ket{i_s}\ket{i_{\bar{s}}}$ and, once Bob has measured 
$\ket{i_{\bar{s}}}$, Eve's probe $\ket{|i_{\bar{s}}|}$ factors out and the resulting state in Alice's hands
is the same as if Eve had applied neither $U_E$ nor $U_F$, i.e.\ the final state had 
been $\sum_i c_i \ket{0}\ket{i_s}\ket{i_{\bar{s}}}$; no error can thus be detected on
$\CTRL$ bits.
\end{example}

Example~\ref{example1} shows  that Eve can  learn the Hamming weight $|i_{\bar{s}}|$ of
the string measured by Bob and stay completely invisible to Alice and Bob, i.e.\ 
induce no error on $\TEST$ and $\CTRL$ bits.
Therefore, in order to make protocol~1 robust, 
the choice of the $\INFO$ bits must be done in a more careful way.

But first, we need to show that
Eve can learn at most the Hamming weight of $i_{\bar{s}}$; this is a consequence
of Eq.~(\ref{dependshw}) below, which is derived from a sequence of lemmas. 
The first lemma states that all the bits in $i$ whose index are in $s$ can be flipped without
changing $\ket{F_{s,i}}$; in Protocol~2, this was true for all qubits in $i$,
but then, all the qubits were returned. In Protocol~1, only the qubits in $s$
are returned to Alice; the following lemma shows that for a fixed $s$, Eve's
state depends only on the bits kept by Bob.

\begin{lemma} \label{lemmaproperty2}
For any attack $(U_E, U_F)$ on Protocol~1 that induces no error on
$\TEST$ and $\CTRL$ bits, if $\ket{E_i}$ and $\ket{F_{s,i}}$ are given by
(\ref{defei1}) and (\ref{deffs})  then
\begin{align}
i_{\bar{s}} &= i'_{\bar{s}} \implies\quad\ket{F_{s,i}} = \ket{F_{s,i'}}. \label{property2}
\end{align}
\end{lemma}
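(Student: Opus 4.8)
The plan is to re-run the argument of Proposition~\ref{propprot2}, but now restricted to flipping only those bits of $i$ whose positions lie in $s$ --- these are precisely the qubits Bob reflects, so they are the only ones Alice can later check. Fix the sequence $s$ and pick any index $k\in s$. Have Alice prepare her $k$-th qubit in state $\ket{+}$ and every other qubit in the $Z$-basis, so that her state is $\frac{1}{\sqrt{2}}\bigl(\ket{i}+\ket{i'}\bigr)$ with $i_k=0$, $i'_k=1$ and $i_t=i'_t$ for all $t\neq k$. Because $k\in s$, the $k$-th qubit is a $\CTRL$ qubit, and since $i$ and $i'$ agree outside position $k$ while $k\notin\bar{s}$, we have $i_{\bar{s}}=i'_{\bar{s}}$.

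By linearity applied to Eq.~(\ref{summerize1}) for each of $\ket{i}$ and $\ket{i'}$, the global final state of the Eve+Alice+Bob system is
\[
\tfrac{1}{\sqrt{2}}\Bigl(\ket{F_{s,i}}\ket{i_s}\ket{i_{\bar{s}}}+\ket{F_{s,i'}}\ket{i'_s}\ket{i'_{\bar{s}}}\Bigr).
\]
Here Bob's registers satisfy $\ket{i_{\bar{s}}}=\ket{i'_{\bar{s}}}$, and among the reflected qubits $\ket{i_s}$ and $\ket{i'_s}$ differ only in the component carrying position $k$. Hence, tracing out Bob's kept qubits and the reflected qubits other than the $k$-th one, the reduced state of Eve's probe together with the returned $k$-th qubit is the \emph{pure} state $\frac{1}{\sqrt{2}}\bigl(\ket{F_{s,i}}\ket{0}+\ket{F_{s,i'}}\ket{1}\bigr)$. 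Rewriting $\ket{0},\ket{1}$ in the $\{\ket{+},\ket{-}\}$ basis gives $\frac12(\ket{F_{s,i}}+\ket{F_{s,i'}})\ket{+}+\frac12(\ket{F_{s,i}}-\ket{F_{s,i'}})\ket{-}$; since the $k$-th qubit was sent as $\ket{+}$ and is a $\CTRL$ qubit, the hypothesis that Eve induces no error on $\CTRL$ bits forces the $\ket{-}$ amplitude to vanish, i.e.\ $\ket{F_{s,i}}=\ket{F_{s,i'}}$.

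To finish I would iterate this one-bit-flip statement: flipping the value at any single position of $s$ leaves $\ket{F_{s,i}}$ unchanged, and any two strings $i,i'$ with $i_{\bar{s}}=i'_{\bar{s}}$ differ only at positions in $s$, so they are joined by a chain of such flips, yielding $\ket{F_{s,i}}=\ket{F_{s,i'}}$, which is Eq.~(\ref{property2}). The one delicate step is the partial-trace bookkeeping: one must check that the discarded subsystems are in identical states in the two branches (Bob's kept qubits because $i_{\bar{s}}=i'_{\bar{s}}$, the other reflected qubits because $i$ and $i'$ agree off position $k$), so that the reduced two-part state is genuinely pure and the interference term that powers the $\CTRL$-test argument survives; the rest is exactly the ``measure a $\CTRL$ qubit in the $X$ basis'' trick already used in Proposition~\ref{propprot2}, and the restriction $k\in s$ is the only new ingredient needed because in Protocol~1 the qubits kept by Bob never return to Alice.
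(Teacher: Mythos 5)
Your proof is correct and follows essentially the same route as the paper: flip one bit at a position $k\in s$ at a time by having Alice send $\ket{+}$ there, use linearity with Eq.~(\ref{summerize1}), trace out everything but Eve's probe and the returned $k$-th qubit, and invoke the no-error-on-$\CTRL$ condition in the $X$ basis, then chain single-bit flips over $s$. The paper handles your ``partial-trace bookkeeping'' point by assuming wlg that $k$ is the first element of $s$ so the remaining reflected qubits factor out explicitly, which is the same observation in different notation.
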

\begin{proof}
The result is trivial if $s$ is empty. If not,
we follow the steps of the proof of Proposition~\ref{propprot2} and
prove this bitwise; let $k$ be an index in $s$, and
$i$ and $i'$ be such that $i_k=0$ and $i'_k=1$, all other bits being the same.
Assume wlg that $k$ is the first element of $s$ i.e.
$s=ks'$ and thus $i_s = i_k i_{s'}$.
If
Alice sends the state 
$\frac{1}{\sqrt{2}}[\ket{i}+\ket{i'}]$ 
i.e.\ the $k$th qubit sent
by Alice is $\ket{+}$ and all the other qubits 
are prepared in the $Z$-basis, with bit values according to $i$,
then by linearity and Eq.~(\ref{summerize1})
the final state of the Eve+Alice+Bob system is
$
\frac{1}{\sqrt{2}} \Big[\ket{F_{s,i}}  \ket{0}+ \ket{F_{s,i'}}\ket{1}\Big]\ket{i_{s'}}\ket{i_{\bar{s}}}
$;
if we trace out all the qubits in $s'$ and $\bar{s}$ to keep only Eve's probe and
qubit $k$ in Alice's
hands, we get the state
\[
\frac{1}{\sqrt{2}}\Big [\ket{F_{s,i}}\ket{0} + \ket{F_{s,i'}}\ket{1}\Big ];
\]
writing $\ket{0}$ and $\ket{1}$  in terms of $\ket{+}$ and $\ket{-}$
and considering only those terms in the resulting state that
contain $\ket{-}$ gives
$
\frac{1}{2}\Big [ \ket{F_{s,i}} - \ket{F_{s,i'}}\Big ] \ket{-}
$;
and since the probability that Alice measures $\ket{-}$ as the $k$th qubit must be 0 (because $k\in s$), 
$\ket{F_{s,i}} - \ket{F_{s,i'}}=0$, i.e., $\ket{F_{s,i}} = \ket{F_{s,i'}}$. 
\end{proof}

The following lemma simply expresses the fact that, when Alice sends $\ket{i}$
and Bob reflects the qubits with indices in $s$ then Eve's final state depends only on
$\ket{i}$ and the state reflected by Bob.
\begin{lemma}\label{lemmaproperty1}
For any attack $(U_E, U_F)$ on Protocol~1 that induces no error on $\TEST$ and $\CTRL$ bits,
if $\ket{E_i}$ and $\ket{F_{s,i}}$ are given by
(\ref{defei1}) and (\ref{deffs})  then
for all $i$, $s$ and $s'$,
\begin{align}
i_s &= i_{s'}\implies\quad\ket{F_{s,i}} = \ket{F_{s',i}}.  \label{property1} 
\end{align}
\end{lemma}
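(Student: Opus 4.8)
The plan is to read off Eq.~(\ref{property1}) directly from the factorized form (\ref{deffs}) of Eve's return operation, using the crucial modeling observation already recorded above: because Bob does not announce the reordering $s$ until Step~4 --- that is, \emph{after} the reflected qubits have passed back through Eve --- the unitary $U_F$ cannot depend on $s$ at all; it depends only on the number $r$ of qubits Bob reflects, so there is one operator $U_F$ for each value of $r$.

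First I would note that the hypothesis $i_s = i_{s'}$ forces $|s| = |s'|$, since equal strings have equal length; write $r$ for this common length (the case $r=0$ is vacuous, as then $s = s' = \emptyset$) and write $w := i_s = i_{s'}$ for the common $r$-bit value of the reflected substring. Because $|s| = |s'| = r$, the \emph{same} unitary $U_F$ governs Bob reflecting according to $s$ and Bob reflecting according to $s'$. Applying Eq.~(\ref{deffs}) to $s$ gives $U_F\ket{E_i}\ket{w} = \ket{F_{s,i}}\ket{w}$, and applying it to $s'$ gives $U_F\ket{E_i}\ket{w} = \ket{F_{s',i}}\ket{w}$. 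The two left-hand sides are literally the same vector: $\ket{E_i}$ is determined by $i$ alone and carries no $s$-dependence (by (\ref{defei1})), and $\ket{w}$ is a fixed computational-basis state.

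Hence $\ket{F_{s,i}}\ket{w} = \ket{F_{s',i}}\ket{w}$. Since $\ket{w}$ is a nonzero vector of the returning register, the two first tensor factors must coincide (formally, take the partial inner product with $\ket{w}$ on that factor), which gives $\ket{F_{s,i}} = \ket{F_{s',i}}$ --- exactly Eq.~(\ref{property1}).

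I do not expect any real obstacle. The single point that genuinely has to be invoked is that $U_F$ is the same operator whether Bob returns the qubits in the order $s$ or the order $s'$ --- equivalently, that Eve is physically unable to make her backward attack depend on that order, since $s$ becomes public only after the return trip. This is precisely the modeling convention that there is one unitary $U_F$ per value of $r$; the no-error hypotheses of the lemma themselves enter only indirectly, through Proposition~\ref{propprot1}, which is what furnishes the factorization (\ref{deffs}) in the first place.
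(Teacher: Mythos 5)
Your proof is correct and follows essentially the same route as the paper's: since $\ket{i_s}=\ket{i_{s'}}$, the same $U_F$ (which depends only on $r=|s|=|s'|$, not on $s$) applied to the same input $\ket{E_i}\ket{i_s}$ forces $\ket{F_{s,i}}\ket{i_s}=\ket{F_{s',i}}\ket{i_{s'}}$ and hence $\ket{F_{s,i}}=\ket{F_{s',i}}$. You merely make explicit two points the paper leaves implicit --- that $U_F$ is the same operator for both orderings and that the common factor $\ket{w}$ can be cancelled --- which is a fair elaboration, not a different argument.
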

\begin{proof}
If
$i_s = i_{s'}$ then $U_F\ket{E_i}\ket{i_s} = U_F\ket{E_i}\ket{i_{s'}}$ and 
thus $\ket{F_{s,i}}\ket{i_s} = \ket{F_{s',i}}\ket{i_{s'}}$. 
\end{proof}

When Eq.~(\ref{property1}) 
is used, we are using the fact that when Eve sees a qubit $\ket{0}$ (resp. a qubit $\ket{1}$)
coming back from Bob, then she cannot
tell to what qubit $\ket{0}$ (resp.  $\ket{1}$) sent by Alice this qubit corresponds provided
of course more than
one $\ket{0}$ (resp. $\ket{1}$) had been sent by Alice. 
The preceding lemmas can be used to show that, if Eve induces no error on $\TEST$ and
$\CTRL$ bits, then
Eve's intermediate state $\ket{E_i}$ just after $U_E$ is applied stays invariant when the bits in $i$
are permuted; let us first look at an example.

\begin{example}\label{example2}
Let $N = 4$ and $r=2$ and let us see that $\ket{E_{1011}} = \ket{E_{0111}}$
i.e.\ Eve's state after the attack $U_E$ on the qubits from Alice to Bob is the same whether
Alice sends state $\ket{1011}$ or $\ket{0111}$. By Eq.~(\ref{property2}), $\ket{F_{14,1011}}
= \ket{F_{14,0011}}$ which is Eve's final state when Bob reflects bits $1$ and $4$ and
Alice sends either $\ket{1011}$ or $\ket{0011}$. Similarly
$\ket{F_{24,0111}} = \ket{F_{24,0011}}$. We now use Eq.~(\ref{property1}) to 
get $\ket{F_{14,0011}} = \ket{F_{24,0011}}$ (Eve cannot tell if the returning $\ket{0}$ is
bit $1$ or bit $2$); those identities imply $\ket{F_{14,1011}} = \ket{F_{24,0111}}$.
We now go back to the definition of $F$; $\ket{F_{14,1011}}$ is Eve's final state if Alice sent
$\ket{1011}$ and Bob reflected the bits $1$ and $4$ and from Eq.~(\ref{deffs}) we get
$U_F\ket{E_{1011}} \ket{11} = \ket{F_{14,1011}}\ket{11}$ (bits $14$ being $11$).
Similarly $U_F\ket{E_{0111}}\ket{11} = \ket{F_{24,0111}}\ket{11}$ and since the r.h.s. members
are equal and $U_F$ is unitary,  $\ket{E_{0111}} = \ket{E_{1011}}$.
\end{example}

Following the lines of Example~\ref{example2}, we prove the following lemma.
\begin{lemma}\label{lem:Eiinvariant}
For any attack $(U_E, U_F)$ on Protocol~1 that induces no error on $\CTRL$ and $\TEST$ bits,
if $\ket{E_i}$ and $\ket{F_{s,i}}$ are given by
(\ref{defei1}) and (\ref{deffs})  then
 for all $i, i' \in \{0,1\}^N$
\begin{align}\label{Eiinvariant}
|i| = |i'|\quad\implies\quad \ket{E_i} &= \ket{E_{i'}}.
\end{align}
\end{lemma}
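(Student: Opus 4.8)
The plan is to imitate Example~\ref{example2} and reduce the statement to the case of a single transposition. Any permutation of $[1\,..\,N]$ is a product of transpositions, and transposing two positions that carry equal bits leaves the string unchanged; since $|i|=|i'|$ means $i$ and $i'$ are related by a permutation of positions, it suffices to prove $\ket{E_i}=\ket{E_{i'}}$ whenever $i$ and $i'$ agree off a two‑element set $\{a,b\}$ with $i_a=1$, $i_b=0$ (so $i'_a=0$, $i'_b=1$). Iterating this elementary step along a chain of such $0$–$1$ swaps that sorts $i$ into $i'$ then yields the lemma. Note $a\neq b$ is available since $N\geq 2$.

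For the elementary step I would use the two singleton reflection lists $s=(a)$ and $s'=(b)$, each of which Bob selects with nonzero probability in Protocol~1, so that the hypothesis ``no error on $\TEST$ and $\CTRL$'' constrains $U_E,U_F$ on them and Eqs.~(\ref{property2}) and~(\ref{property1}) apply. Let $j$ be the string agreeing with $i$ (and $i'$) off $\{a,b\}$ with $j_a=j_b=0$. By Lemma~\ref{lemmaproperty2} applied with $k=a\in s=(a)$, flipping bit $a$ gives $\ket{F_{(a),i}}=\ket{F_{(a),j}}$; applied with $k=b\in s'=(b)$, flipping bit $b$ gives $\ket{F_{(b),i'}}=\ket{F_{(b),j}}$. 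Since $j_{(a)}=(0)=j_{(b)}$, Lemma~\ref{lemmaproperty1} gives $\ket{F_{(a),j}}=\ket{F_{(b),j}}$, and therefore $\ket{F_{(a),i}}=\ket{F_{(b),i'}}$.

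To finish I would push this back through the definition~(\ref{deffs}). Because $i_a=1=i'_b$, we have $i_{(a)}=(1)=i'_{(b)}$, so $U_F\ket{E_i}\ket{1}=\ket{F_{(a),i}}\ket{1}$ and $U_F\ket{E_{i'}}\ket{1}=\ket{F_{(b),i'}}\ket{1}$, where $U_F$ is the unitary attached to $r=1$. The right‑hand sides coincide and $U_F$ is invertible, hence $\ket{E_i}=\ket{E_{i'}}$, which is the elementary case.

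I expect the only genuine subtlety to be the reduction itself: one must check that equal‑weight binary strings are connected by a chain of transpositions each swapping a $0$ with a $1$, and that every reflection pattern invoked — here the singletons $(a)$ and $(b)$ — is one Bob chooses with positive probability, so that the no‑error hypothesis really bites on it. Once that scaffolding is in place, the algebra is exactly the bookkeeping already displayed in Example~\ref{example2}.
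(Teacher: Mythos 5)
Your proposal is correct and follows essentially the same route as the paper's proof: reduce to a single $0$--$1$ swap, use Eq.~(\ref{property2}) to replace the swapped bits by $0$, Eq.~(\ref{property1}) to identify the two reflection patterns, and then invert $U_F$ via Eq.~(\ref{deffs}). The only cosmetic difference is that you take the reflected lists to be the singletons $(a)$ and $(b)$ (and correctly note they occur with positive probability and share the same $r=1$ unitary $U_F$), whereas the paper uses wlg positions $1,2$ followed by an arbitrary common tail $s'$.
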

\begin{proof}
Eq.~(\ref{Eiinvariant}) means that $\ket{E_i}$ 
depends only on the number of ``$0$''s and ``$1$''s in $i$, 
not on their positions.
We need only show that any two (distinct) bits in $i$ can
be swapped without affecting $\ket{E_i}$ and, wlg,
$\ket{E_{01i''}} = \ket{E_{10i''}}$ for any
$i'' \in \{0,1\}^{N-2}$\footnote{If $n \geq 1$ then 
$N=\lceil 8n(1+\delta)\rceil \geq 8$ and $N-2 \geq 1$.}. 
Let $s'$ be any sequence of distinct
elements of $[3\, ..\, N]$;  
$\ket{F_{1s',10i''}} = \ket{F_{1s',00i''}}$ and
$\ket{F_{2s',01i''}} = \ket{F_{2s',00i''}}$ by Eq.~(\ref{property2}); also
$\ket{F_{1s',00i"}} = \ket{F_{2s',00i''}}$ by Eq.~(\ref{property1}) and thus
$\ket{F_{1s',10i''}} = \ket{F_{2s',01i''}}.$
Using Eq.~(\ref{deffs}),
\begin{align*}
U_F \ket{E_{10i''}} \ket{1i_{s'}} &=\ket{F_{1s',10i''}}\ket{1i_{s'}}   &&\text{($i=10i''$; $s=1s'$)}\\
U_F \ket{E_{01i''}}\ket{1i_{s'}} &= \ket{F_{2s',01i''}}\ket{1i_{s'}}   &&\text{($i=01i''$; $s=2s'$)}
\end{align*}
and, since  $i_{s'}$ is the same for $i=01i''$ and $i=10i''$ and $\ket{F_{1s',10i''}} = \ket{F_{2s',01i''}}$
the r.h.s. are equal and so
 $\ket{E_{10i''}} = \ket{E_{01i''}}$.
\end{proof}

\begin{example}\label{example3}
Lemma~\ref{lemmaproperty2} allows replacing all bits indexed by $s$ by $0$ without changing $\ket{F_{s,i}}$.
This means for example that if $i=1010$ and $s=34$, then
$\ket{F_{s,i}} = \ket{F_{34,1010}} = \ket{F_{34,1000}}$; similarly if $i'=0101$ and $s'=12$,
then $\ket{F_{s',i'}} =
\ket{F_{12,0101}} = \ket{F_{12,0001}}$. This means that $\ket{F_{s,i}}$ depends only
on the bits not indexed by $s$, i.e.\ the bits indexed by $\bar{s}$. Here $i_{\bar{s}} = 10$ and
$i'_{\bar{s}'} = 01$; those two strings have the same Hamming weight. Let us see that
they give the same final state for Eve.  By Eq.~(\ref{deffs})
$U_F\ket{E_{0001}}\ket{00} = \ket{F_{12,0001}}\ket{00}$ (Bob reflects bits $12$) and
$U_F\ket{E_{1000}}\ket{00} = \ket{F_{34,1000}}\ket{00}$ (Bob reflects bits $34$).
We know from Lemma~\ref{lem:Eiinvariant} that
$\ket{E_{0001}} = \ket{E_{1000}}$;
this implies $\ket{F_{12,0001}} = \ket{F_{34,1000}}$ and thus
$\ket{F_{s,i}} = \ket{F_{s',i'}}$. 
\end{example}

Example~\ref{example3} provides the intuition behind the proof of the
next proposition that is for Protocol~1 what Proposition~\ref{propprot2} is for Protocol~2.

\begin{prop}
If $(U_E, U_F)$ is an attack on Protocol~1 that induces no error on $\TEST$ and $\CTRL$ bits, 
and if $\ket{E_i}$ and $\ket{F_{s,i}}$ are given by
(\ref{defei1}) and (\ref{deffs})  then
for all $s$ and $s'$ of the same length $r \geq 0$, and all $i,i' \in \{0,1\}^N$,
\begin{align}
|i_{\bar{s}}| = |i'_{\bar{s}'}| \quad \implies\quad \ket{F_{s,i}} &= \ket{F_{s',i'}}.\label{dependshw}
\end{align}
\end{prop}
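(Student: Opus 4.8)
The plan is to imitate the reduction carried out in Example~\ref{example3}, replacing each informal remark there by an invocation of one of the three preceding lemmas. First I would dispose of the degenerate case $r=0$: then $s$ and $s'$ are empty, $i_{\bar s}=i$, $i'_{\bar{s}'}=i'$, and Eq.~(\ref{deffs}) collapses to $\ket{F_{s,i}}=\ket{E_i}$, so the assertion is exactly Lemma~\ref{lem:Eiinvariant}. From here on assume $r\geq 1$.

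The core idea is to \emph{normalise the reflected bits to zero}. Given $i$ and $s$, define $j\in\{0,1\}^N$ by $j_t=i_t$ for $t\in\bar s$ and $j_t=0$ for $t\in s$; then $j_{\bar s}=i_{\bar s}$, so Lemma~\ref{lemmaproperty2} gives $\ket{F_{s,i}}=\ket{F_{s,j}}$, while by construction $j_s=0^r$ and $|j|=|j_{\bar s}|=|i_{\bar s}|$. Define $j'$ from $i'$ and $s'$ in the same way, so that $\ket{F_{s',i'}}=\ket{F_{s',j'}}$, $j'_{s'}=0^r$, and $|j'|=|i'_{\bar{s}'}|$. The hypothesis $|i_{\bar s}|=|i'_{\bar{s}'}|$ now reads $|j|=|j'|$, hence Lemma~\ref{lem:Eiinvariant} yields $\ket{E_j}=\ket{E_{j'}}$. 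Finally I would push this equality through $U_F$: since $|s|=|s'|=r\geq 1$, the same unitary $U_F$ (the one associated with this value of $r$) governs both returns, and applying Eq.~(\ref{deffs}) once with $(j,s)$ and once with $(j',s')$, together with $j_s=j'_{s'}=0^r$, gives $U_F\ket{E_j}\ket{0^r}=\ket{F_{s,j}}\ket{0^r}$ and $U_F\ket{E_{j'}}\ket{0^r}=\ket{F_{s',j'}}\ket{0^r}$. Because $\ket{E_j}=\ket{E_{j'}}$ the two left-hand sides agree, so $\ket{F_{s,j}}\ket{0^r}=\ket{F_{s',j'}}\ket{0^r}$, and factoring out the common tensor factor $\ket{0^r}$ leaves $\ket{F_{s,j}}=\ket{F_{s',j'}}$. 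Chaining the three equalities, $\ket{F_{s,i}}=\ket{F_{s,j}}=\ket{F_{s',j'}}=\ket{F_{s',i'}}$.

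I do not anticipate a substantive obstacle here: all the real work is already packaged into Lemmas~\ref{lemmaproperty2}, \ref{lemmaproperty1} and \ref{lem:Eiinvariant}, and this proposition is merely their correct composition (it is the Protocol~1 counterpart of Proposition~\ref{propprot2}). The few points that need care are (i) checking that Lemma~\ref{lemmaproperty2} genuinely lets one zero out \emph{precisely} the bits indexed by $s$ without disturbing $\ket{F_{s,i}}$, so that the normalisation step is legitimate; (ii) ensuring that the \emph{same} operator $U_F$ is applied for $s$ and for $s'$, which is exactly why the statement insists $|s|=|s'|=r$; and (iii) the trivial separation of the $r=0$ case. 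It is worth noting that Lemma~\ref{lemmaproperty1} is not literally invoked once both sides have been normalised to $0^r$ on the reflected register, although it is the conceptual reason this normalisation may legitimately turn $s$ into $s'$.
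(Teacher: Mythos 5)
Your argument is exactly the paper's proof: normalise the reflected bits to $0^r$ via Lemma~\ref{lemmaproperty2}, invoke Lemma~\ref{lem:Eiinvariant} on the resulting strings of equal Hamming weight, and push the equality of the $\ket{E_j}$ through $U_F$ using Eq.~(\ref{deffs}) with $\ket{j_s}=\ket{j'_{s'}}=\ket{0^r}$. Your explicit handling of $r=0$ and the remark that the same $U_F$ is used because $|s|=|s'|=r$ are minor refinements of the same reasoning, so the proposal is correct and essentially identical to the paper's.
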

\begin{proof}
Let $j$ and $j'$ be defined by
$j_s = j'_{s'} = 0^r$, $j_{\bar{s}} =i_{\bar{s}}$ and $j'_{\bar{s}'} = i'_{\bar{s}'}$. Then
$\ket{F_{s,i}} = \ket{F_{s,j}}$ and $\ket{F_{s',i'}} = \ket{F_{s',j'}}$ by Eq.~(\ref{property2}).
Since $|j'| = |j|$, 
 $\ket{E_j} = \ket{E_{j'}}$ by Eq.~(\ref{Eiinvariant}); by  Eq.~(\ref{deffs}), 
$U_F\ket{E_j}\ket{j_s} = \ket{F_{s,j}}\ket{j_s}$ and $U_F\ket{E_{j'}}\ket{j'_{s'}} = \ket{F_{s',j'}}\ket{j'_{s'}}$
and thus, since $\ket{j_s} = \ket{j'_{s'}} = \ket{0^r}$,  $\ket{F_{s,j}} = \ket{F_{s',j'}}$.
\end{proof}

Eq.~(\ref{dependshw}) can be rewritten 
$\ket{F_{s,i}} = \ket{F_{|i_{\bar{s}}|}}$ 
representing Eve's final state 
when the Hamming weight of the string measured by Bob is $|i_{\bar{s}}|$.

\begin{theorem}\label{theoreminfoprotocol1}
With any attack 
on Protocol~1 that induces no error on $\TEST$ and $\CTRL$ bits,
the eavesdropper can learn at most the number of ``$0$''s and ``$1$''s measured by Bob
and Eve's final state can be written $\ket{F_{|i_{\bar{s}}|}}$.
\end{theorem}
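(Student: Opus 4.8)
The plan is to assemble the theorem from the chain of results already established, the last of which is Eq.~(\ref{dependshw}). First I would observe that Eq.~(\ref{dependshw}) says exactly that the pure state $\ket{F_{s,i}}$ depends on the pair $(s,i)$ only through the single integer $|i_{\bar{s}}|$, the Hamming weight of the substring Bob keeps (and measures). Hence I can legitimately introduce the notation $\ket{F_h}$ for $0\le h\le N$ by setting $\ket{F_h} := \ket{F_{s,i}}$ for any $(s,i)$ with $|i_{\bar{s}}| = h$; this is well-defined precisely by (\ref{dependshw}), and gives the rewriting $\ket{F_{s,i}} = \ket{F_{|i_{\bar{s}}|}}$ announced just before the theorem statement.

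Next I would plug this into the global final state. By Eq.~(\ref{summerize1}), when Alice sends a basis state $\ket{i}$ and Bob reflects the qubits indexed by $s$, the final Eve+Alice+Bob state after $U_F$ is $\ket{F_{s,i}}\ket{i_s}\ket{i_{\bar{s}}} = \ket{F_{|i_{\bar{s}}|}}\ket{i_s}\ket{i_{\bar{s}}}$. By linearity, if Alice sends a general superposition $\ket{\phi} = \sum_i c_i\ket{i}$, the final state is $\sum_i c_i\,\ket{F_{|i_{\bar{s}}|}}\ket{i_s}\ket{i_{\bar{s}}}$. Now Bob measures his register $\ket{i_{\bar{s}}}$ in the standard basis; conditioned on Bob's outcome being some fixed string $w$ with $|w| = h$, the (unnormalized) post-measurement state is $\ket{F_h}\bigl(\sum_{i:\,i_{\bar{s}}=w} c_i\ket{i_s}\bigr)\ket{w}$, in which Eve's probe factors out as the fixed state $\ket{F_h}$, independent of everything else. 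Thus, after Bob's measurement, Eve holds $\ket{F_h}$ where $h$ is the Hamming weight of Bob's measured string, and nothing more: her reduced state is a function of $|i_{\bar{s}}|$ alone. Since the $\SIFT$ bits, and in particular the $\INFO$ bits, are a substring of Bob's measured string, Eve's knowledge of them is limited to whatever is conveyed by its Hamming weight. This yields both assertions of the theorem: Eve learns at most the number of ``$0$''s and ``$1$''s measured by Bob, and her final state is $\ket{F_{|i_{\bar{s}}|}}$.

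I do not expect a genuine obstacle here: the theorem is essentially a restatement of Eq.~(\ref{dependshw}) together with the global-state formula (\ref{summerize1}), so the only care needed is (i) checking that $\ket{F_h}$ is well-defined, which is immediate from (\ref{dependshw}), and (ii) being explicit that when Bob measures, Eve's probe factors out, so that her state genuinely depends on $i$ and $s$ only through $|i_{\bar{s}}|$ and on nothing that would let her distinguish $\INFO$ strings of equal weight. The mild subtlety worth a sentence is that this is an upper bound on Eve's information --- Example~\ref{example1} shows the bound is attained, so one should phrase the conclusion as ``at most'' rather than claiming Eve learns nothing.
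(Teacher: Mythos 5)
Your proposal is correct and follows essentially the same route as the paper's proof: rewrite $\ket{F_{s,i}}=\ket{F_{|i_{\bar{s}}|}}$ via Eq.~(\ref{dependshw}), use linearity and Eq.~(\ref{summerize1}) to obtain the final state $\sum_i c_i\ket{F_{|i_{\bar{s}}|}}\ket{i_s}\ket{i_{\bar{s}}}$, and note that Bob's measurement projects onto a state in which Eve's probe factors out, so she learns at most the Hamming weight of the measured string. Your extra remarks (well-definedness of $\ket{F_h}$ and the explicit conditional post-measurement state) are just a more detailed rendering of the same argument.
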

\begin{proof}
Let $(U_E, U_F)$ be an arbitrary attack that induces no error on $\TEST$ and $\CTRL$  bits.
If Alice sent any superposition $\ket{\phi} = \sum_{i \in \{0,1\}^N} c_i \ket{i}$ and Bob returned
the bits selected by $s$, then using linearity and Eq.~(\ref{summerize1}) 
with $\ket{F_{s,i}}= \ket{F_{|i_{\bar{s}}|}}$ for all $i$ gives
\begin{equation}\label{finalstateprot1}
\sum_i c_i  \ket{F_{|i_{\bar{s}}|}} \ket{i_s}\ket{i_{\bar{s}}}
\end{equation}
as the state describing the final Eve+Alice+Bob system. Once Bob measures $\ket{i_{\bar{s}}}$
the state is projected onto a state where $\ket{F_{|i_{\bar{s}}|}}$ factors out and Eve is
thus left with a state that depends only on
the Hamming weight $|i_{\bar{s}}|$ of the string measured by Bob
and thus can learn
at most that Hamming weight. Since she knows the length of $s$, this means she can learn
at most the number of  ``$0$''s and ``$1$''s measured by Bob.
\end{proof}

\subsubsection{Information leaked by Protocol 1.}
As shown in Example~\ref{example1},
 Eve can indeed learn the Hamming weight of the string measured by Bob.
This is 
why the mock protocol of 
Section~\ref{sec:idea}
failed. There was only one $\SIFT$ bit and
no permutation could ever hide its value.

{}From Eq.~(\ref{finalstateprot1}) one also sees
that the probability of Bob measuring $i_{\bar{s}}$ is unaffected by Eve's attack, just
because the norm of $\ket{F_{|i_{\bar{s}}|}}$ is $1$ (this is a normalized state);
Eve's attack has no effect at all on Bob's statistics. 
The $\SIFT$ bits are equal to the random $Z$-bits chosen by Alice; 
the $X$ bits measured by Bob
are also random bits, as they would be without Eve's attack.  

 From the string of $N-r$ bits (whose indices are in $\bar{s}$) 
measured by Bob, about half the bits  are discarded because Alice sent the  
corresponding qubit in the $X$-basis. The bits left are the $\SIFT$ bits; $n$ of
them are used as $\TEST$ bits, the others serve
as a pool selecting the $\INFO$ bits. Eve's knowledge of $|i_{\bar{s}}|$
provides indirect knowledge on the statistics of occurrence of
``$0$''s and ``$1$''s in the $\INFO$ bits and the protocol would nevertheless not be robust if the
$\INFO$ string was obtained by picking randomly 
 $n$ bits from the $\SIFT$ bits not used as $\TEST$ bits (or the first $n$ ones available
 as in Protocol~2). We now give an asymptotic bound on Eve's accessible information.

\begin{theorem}\label{thprotone}
For any attack on Protocol~1 that induces no error on $\TEST$ and $\CTRL$ bits,
Eve's information on the $\INFO$ string is
asymptotically less than $0.293 + O(n^{-1})$ bits. 
\end{theorem}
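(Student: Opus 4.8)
The plan is to combine Theorem~\ref{theoreminfoprotocol1} with a short information-theoretic computation. By Theorem~\ref{theoreminfoprotocol1}, after Bob measures, Eve's probe is the \emph{pure} state $\ket{F_w}$ with $w=|i_{\bar s}|$, so her quantum side information depends on the run only through the Hamming weight $W:=|i_{\bar s}|$ of the entire string Bob measured. Besides this probe, Eve sees only the public transcript $\tau$ --- Alice's basis choices, Bob's $\SIFT/\CTRL$ choices, the chosen $\TEST$ and $\INFO$ positions, and the announced $\TEST$ values $t$. Letting $y$ denote the $\INFO$ string, we have a Markov chain $y\to(W,\tau)\to(\text{Eve's measurement outcome})$ --- Eve's state is \emph{exactly} a classical function of $(W,\tau)$ via the collapse of $\ket{i_{\bar s}}$ in Eq.~(\ref{finalstateprot1}) --- so by the data-processing inequality $I_{\mathrm{acc}}(y;\text{Eve})\le I(y;W,\tau)$, and it suffices to bound this classical quantity.

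Next I would unfold the structure of the string Bob measured. Conditioning on $\tau$ fixes which of the $\approx N/2$ measured qubits were sent by Alice in the $Z$ basis --- these are the $\SIFT$ bits, of which $n$ form $y$ and $n$ form the revealed string $t$ --- and which were sent in the $X$ basis (the ``$X$-measured'' bits); it also fixes all cardinalities. By Eq.~(\ref{finalstateprot1}), Eve's attack leaves Bob's measurement statistics untouched, so \emph{all} of these measured bits are independent fair coins: the $\SIFT$ bits are Alice's uniform $Z$-bits, and each $X$-basis qubit collapses to a uniform bit. Write $W=|y|+|t|+Z$, where $Z$ is the Hamming weight of the remaining $\ell$ measured bits (the $\SIFT$ bits that are neither $\INFO$ nor $\TEST$, together with the $X$-measured bits). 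Then, conditionally on $\tau$, the string $y$ is uniform on $\{0,1\}^n$, $Z\sim\mathrm{Bin}(\ell,\tfrac12)$, and $y,t,Z$ are mutually independent. Since Eve knows $t$ (hence $|t|$) and $\ell$, knowing $W$ is equivalent to knowing $|y|+Z$; using $I(y;\tau)=0$ (as $y\perp\tau$ and $y\perp t$),
\begin{equation}
I(y;W,\tau)=\mathbb{E}_\tau\!\big[I(y;|y|+Z)\big]=\mathbb{E}_\tau\!\Big[H\big(\mathrm{Bin}(n+\ell,\tfrac12)\big)-H\big(\mathrm{Bin}(\ell,\tfrac12)\big)\Big],
\end{equation}
the last equality using $|y|\perp Z$, so that $|y|+Z\sim\mathrm{Bin}(n+\ell,\tfrac12)$, and $H(|y|+Z\mid|y|)=H(Z)$.

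It remains to estimate the difference of binomial entropies and to control $\ell$. From the standard asymptotics $H\big(\mathrm{Bin}(k,\tfrac12)\big)=\tfrac12\log_2\!\big(\pi e k/2\big)+O(1/k)$ (Stirling plus a Laplace/Euler--Maclaurin estimate of the sum), the additive constant cancels and the summand equals $\tfrac12\log_2\!\big(1+n/\ell\big)+O(1/n)$ whenever $\ell=\Omega(n)$. The number of qubits Bob measures is $\mathrm{Bin}(N,\tfrac12)$ with mean $N/2\ge 4n(1+\delta)$, so $\ell=(\text{measured})-2n$ has mean at least $2n(1+2\delta)$; by a Chernoff bound $\Pr[\ell<2n]\le e^{-\Omega(\delta^2 n)}$, and on that rare event (as well as when the protocol aborts for want of $\SIFT$ bits, equally rare) Eve's information is trivially at most $n$, contributing $o(1/n)$ to $\mathbb{E}_\tau[\cdot]$. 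On the complementary event one has $\ell\ge 2n$, hence $n/\ell\le\tfrac12$, and therefore
\begin{equation}
I_{\mathrm{acc}}(y;\text{Eve})\ \le\ I(y;W,\tau)\ \le\ \tfrac12\log_2\tfrac32+O(1/n)\ <\ 0.293+O(1/n),
\end{equation}
since $\tfrac12\log_2\tfrac32=0.29248\ldots$; retaining the dependence on $\delta$ gives the sharper bound $\tfrac12\log_2\frac{3+4\delta}{2+4\delta}$.

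Two points carry the real weight. The first is the reduction $I_{\mathrm{acc}}(y;\text{Eve})\le I(y;W,\tau)$: one must take seriously that Eve's quantum advantage is genuinely a classical function of $W$ (this is exactly Theorem~\ref{theoreminfoprotocol1} with Eq.~(\ref{finalstateprot1})), and one must include the public $\TEST$ values --- it is precisely the fact that Eve can subtract $|t|$ from $W$ that pushes the bound up to $\tfrac12\log_2\tfrac32$ rather than the smaller $\tfrac12\log_2\tfrac43$ one would get from $W$ alone. The second is the combinatorial bookkeeping behind the ``$n$ out of $\approx 3n$'' split: one must correctly count the $\approx N/4\approx 2n$ $X$-basis qubits Bob measured as independent noise masking $|y|$, and tame the fluctuations of all these counts by concentration. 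The binomial-entropy asymptotic with an explicit remainder is the only genuinely analytic ingredient, and it is classical.
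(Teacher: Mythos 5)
Your proposal is correct and follows essentially the same route as the paper: reduce via Theorem~\ref{theoreminfoprotocol1} to Eve knowing only the Hamming weight of Bob's measured bits (minus the revealed $\TEST$ values), express her information as a difference of binomial entropies, $\tfrac12\lg\frac{k-1}{k-2}$ in the paper's notation, and dispose of the atypical counts by a concentration bound, yielding $\tfrac12\lg\tfrac32<0.293$. Your explicit data-processing step and the decomposition $W=|y|+|t|+Z$ merely spell out what the paper's computation of $I(W;X,k)=H(W\mid k)-H(W\mid X,k)$ does implicitly.
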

\begin{proof}
Let $N-r=kn$ be the number of bits measured by Bob; it is expected that $k=4(1+\delta)$;
those bits are all random but 
Eve knows their Hamming weight. Also known are the indices of the $\SIFT$ bits,
of the $\INFO$ bits, as well as the indices and values of the $\TEST$ bits.
Eve thus knows the Hamming weight $W$ of the $kn-n$ remaining 
random bits that are not $\TEST$;
$W$ is distributed binomially, with $kn-n$ trials and probability $1/2$ of success.
The entropy of a binomial distribution with $n$ trials and probability $p$ of
success is $1/2 \lg(2\pi e p(1-p)n) + O(1/n)$ 
where $O(1/n)$ is the error~\cite{JS99}\footnote{When
$n$ is large, the binomial $B(n,p)$ is well approximated by a normal with variance $\sigma^2=np(1-p)$,
whose entropy  is $\lg(\sigma \sqrt{2\pi e}) = \lg{\sqrt{2\pi e p(1-p)n}}$. With a factor of
$\frac{1}{\lg(e)}$ this rewrites
$\frac{1}{2}\log n + \frac{1}{2} + \log \sqrt{2\pi  (1-p)}$ as in \cite{JS99} where is proven a result implying the
error is of order $\frac{1}{n}$. A simple computer program shows that for
$p=0.5$ and $n=20$ the error is already less than $3.4\times 10^{-4}$.}; the
entropy
$H(W\mid k)$ is thus
\begin{equation}
H(W\mid k) =  \frac{1}{2} \lg\left(\frac{1}{2}\pi e (k-1)n \right) + O\left(\frac{1}{n}\right).
\end{equation}
For any particular $n$-bit $\INFO$ string $x$, the entropy of $W$ given $x$ and $k$
is the entropy of the a binomial distribution with $kn-2n$ trials (for the $kn-2n$ remaining random bits)
and is thus
\begin{equation}
H(W\mid x, k) = \frac{1}{2} \lg\left(\frac{1}{2}\pi e (k-2)n \right) + O\left(\frac{1}{n}\right).
\end{equation}
The bits of the $\INFO$ string are random bits chosen by Alice
and  the strings $x$ are thus equally likely; this implies $H(W\mid X, k) =
 H(W\mid x, k)$.
The information Eve gains on $X$ when $W$ is known is, for any fixed $k$, 
 $H(X \mid k) - H(X\mid W, k)$. 
It is a basic fact from information theory that $H(X\mid k) - H(X\mid W, k) = 
H(W\mid k) - H(W\mid X, k)$
and Eve's information is thus
\begin{align} \nonumber
I(W;X, k) &= H(W\mid k) - H(W \mid X, k) \\ \nonumber
 &=  \frac{1}{2} \lg{\frac{k-1}{k-2}} + O\left(\frac{1}{n}\right)\\ 
 &= \frac{1}{2}\lg\left( 1+ \frac{1}{k-2}\right) + O\left(\frac{1}{n}\right).
\end{align}
For $k\geq 4$, $I(W;  X, k) \leq 0.293 + O(n^{-1})$; the probability that
$k < 4$ is exponentially small in $n$ and thus 
$I(W; X) \leq \sum_k I(W; X, k) p(k) < 0.293 + O(n^{-1})$. 

\end{proof}



\subsection{Properties of Protocol~1$'$.}
\subsubsection{The information contained in the $\INFO$ string.}
Alice chooses randomly $y\in I_{n,\epsilon}$
to send as the $\INFO$ string. The information contained in $y$ is thus the entropy
of a uniform distribution on $I_{n,\epsilon}$.
\begin{prop} If $\epsilon > 0$, the entropy of the uniform distribution on $I_{n,\epsilon}$ is
exponentially close to $n$ (its distance to $n$ is of order $e^{-\Omega(n)}$).
\end{prop}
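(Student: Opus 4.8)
The plan is to observe that the entropy of the uniform distribution on $I_{n,\epsilon}$ is exactly $\lg|I_{n,\epsilon}|$, so it suffices to show $n-\lg|I_{n,\epsilon}|=e^{-\Omega(n)}$. Since $I_{n,\epsilon}\subseteq\{0,1\}^n$ we have trivially $\lg|I_{n,\epsilon}|\leq n$, so the only real content is a lower bound on $|I_{n,\epsilon}|$, i.e.\ an upper bound on the number of strings \emph{excluded} from $I_{n,\epsilon}$.

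First I would write $|I_{n,\epsilon}|=\sum_{w}\binom{n}{w}$, where $w$ ranges over the integers with $|w/n-1/2|\leq\epsilon/2$, equivalently $|w-n/2|\leq n\epsilon/2$. Viewing a uniformly random $y\in\{0,1\}^n$ as $n$ independent fair coin flips, the complement $\{0,1\}^n\setminus I_{n,\epsilon}$ is precisely the event that the sample mean $|y|/n$ deviates from $1/2$ by more than $\epsilon/2$; this is a standard large-deviation event. A Hoeffding bound with deviation parameter $t=\epsilon/2$ gives $|\{0,1\}^n\setminus I_{n,\epsilon}|\leq 2^n\cdot 2e^{-2n(\epsilon/2)^2}=2^{n}\cdot 2e^{-n\epsilon^2/2}$, and hence $|I_{n,\epsilon}|\geq 2^n\bigl(1-2e^{-n\epsilon^2/2}\bigr)$.

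Then I would take base-$2$ logarithms: $\lg|I_{n,\epsilon}|\geq n+\lg\bigl(1-2e^{-n\epsilon^2/2}\bigr)$. Because $\epsilon>0$, there is an $n_0(\epsilon)$ beyond which $2e^{-n\epsilon^2/2}<1/2$, and on $[0,1/2]$ the elementary inequality $\lg(1-x)\geq-2x/\ln 2$ holds (it follows from $\ln(1-x)\geq-2x$, which is clear at $x=0$ and has the correct derivative comparison for $x\leq1/2$). This yields $\lg|I_{n,\epsilon}|\geq n-\frac{4}{\ln 2}\,e^{-n\epsilon^2/2}$ for $n\geq n_0(\epsilon)$; combining with $\lg|I_{n,\epsilon}|\leq n$ gives $0\leq n-\lg|I_{n,\epsilon}|\leq\frac{4}{\ln 2}\,e^{-n\epsilon^2/2}=e^{-\Omega(n)}$, with the constant in $\Omega$ being $\epsilon^2/2>0$. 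That is exactly the claimed statement.

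The argument is essentially routine; the only points needing a little care are that the concentration bound must be applied with the \emph{fractional} deviation $\epsilon/2$ (so Hoeffding's $2e^{-2nt^2}$ with $t=\epsilon/2$ produces the stated $2e^{-n\epsilon^2/2}$), and that the integer rounding in the range of $w$ can only enlarge $I_{n,\epsilon}$ relative to the ideal interval and so only helps the lower bound. One should also note that $\epsilon>0$ is what makes the exponent genuinely negative: when $\epsilon=0$ this estimate degenerates, consistent with the $n-\tfrac12\lg n$ behaviour of $I_{n,0}$ mentioned in the text. (A sharper constant could be obtained from Stirling's approximation to the central binomial coefficient, but it is not needed here.)
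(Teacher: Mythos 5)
Your proposal is correct and follows essentially the same route as the paper: both bound $n-\lg|I_{n,\epsilon}|$ by applying Hoeffding's inequality to the fractional Hamming-weight deviation $\epsilon/2$ and then converting $\lg(1-x)$ to a linear bound for small $x$ (the paper uses $-\ln(1-x)\leq 3x/2$ on $(0,0.5)$, you use $\ln(1-x)\geq -2x$, yielding the same $e^{-n\epsilon^2/2}$ decay up to a constant factor).
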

\begin{proof}
For any integer $N>0$, the entropy of the uniform distribution on a set of $N$ elements is  $\log_2(N)$. We are thus
looking for a lower bound on $\log_2(|I_{n,\epsilon}|)$.

Let  $Y=(Y_1,\ldots, Y_n)$ be a uniformly distributed random variable on $\{0,1\}^n$;
the $Y_i$ are independent Bernoullis with probability $p=1/2$.
 Let $\overline{Y} = \sum_{i=1}^n Y_i/n$; $\overline{Y}$ is nothing but $|Y|/n$;
the expectancy $E[\overline{Y}]$ is $1/2$. 
\begin{align*}
n-\log_2(|I_{n,\epsilon}|) &= -\log_2\left(\frac{|I_{n,\epsilon}|}{2^n}\right) \\
 &= -\log_2\left(\Pr\left[ \left|\overline{Y} - \frac{1}{2}\right| \leq \frac{\epsilon}{2}\right]\right).
\end{align*}
By Hoeffding's inequality \eqref{ValAbsHoeffding}
\[\displaystyle \Pr\left[\left|\overline{Y}-\frac{1}{2}\right| > 
\frac{\epsilon}{2}\right] \leq 2\exp\left(-\frac{\epsilon^2}{2}n\right),\text{\quad and thus}\]
\begin{align*}
n-\log_2(|I_{n,\epsilon}|) &\leq -\frac{1}{\ln(2)} \ln\left(1-2\exp\left(-\frac{\epsilon^2}{2}n\right)\right).
\end{align*}
For $0<x<0.5$, it is easy to verify that
$
-\ln(1-x) \leq 3x/2
$;
thus, for  $n$ large enough
(e.g.\ $ n > \ln(16)/\epsilon^2$),
\[ 
n-\log_2(|I_{n,\epsilon}|) \leq \frac{3}{\ln{2}} \exp\left(-\frac{\epsilon^2}{2}n\right). \qedhere
\] 
\end{proof}

While the entropy is $\approx\!{n}$ when $\epsilon > 0$, 
we now show that it has a gap of $0.5\log_2(n)$ bits when $\epsilon =0$.
\begin{prop}
For $\epsilon = 0$, the entropy of the uniform distribution on $I_{n,0}$ is asymptotically
equal to ${n - 0.5\log_2(n)} -0.5(\log_2(\pi)-1)$
\end{prop}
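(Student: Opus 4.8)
The plan is to recognize that this is essentially Stirling's formula applied to the central binomial coefficient. Since $n$ is required to be an even integer, the condition $\bigl||y|/n - 1/2\bigr|\le 0$ in the definition of $I_{n,\epsilon}$ forces $|y|=n/2$ exactly when $\epsilon=0$, so $I_{n,0}$ is precisely the set of $n$-bit strings of Hamming weight $n/2$ and $|I_{n,0}| = \binom{n}{n/2}$. The entropy of the uniform distribution on a set of $M$ elements being $\log_2 M$, the quantity to estimate is simply $\log_2\binom{n}{n/2}$.

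Concretely, I would substitute Stirling's formula in the form $m! = \sqrt{2\pi m}\,(m/e)^m\bigl(1+O(1/m)\bigr)$ into $\binom{n}{n/2} = n!/\bigl((n/2)!\bigr)^2$. The powers of $e$ cancel, the factor $(n/e)^n$ in the numerator against $\bigl((n/2e)^{n/2}\bigr)^2 = (n/2e)^n$ in the denominator leaves $2^n$, and the square-root prefactors combine to $\sqrt{2\pi n}/(2\pi\cdot n/2) = \sqrt{2/(\pi n)}$; hence $\binom{n}{n/2} = 2^n\sqrt{2/(\pi n)}\,\bigl(1+O(1/n)\bigr)$. Taking base-$2$ logarithms gives
\begin{equation*}
\log_2\binom{n}{n/2} = n + \tfrac12\log_2\frac{2}{\pi n} + O(1/n) = n - \tfrac12\log_2 n - \tfrac12\bigl(\log_2\pi - 1\bigr) + O(1/n),
\end{equation*}
which is exactly the claimed value, the vanishing $O(1/n)$ remainder justifying the word ``asymptotically''.

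There is no genuinely hard step here; the only points requiring a little care are, first, noting that $n$ even makes $I_{n,0}$ the exact half-weight layer (for odd $n$ it would be empty, but the protocol fixes $n$ even), and second, the bookkeeping of the multiplicative constants in Stirling so that the additive constant $-\tfrac12(\log_2\pi-1)$ emerges correctly. One could alternatively quote the known asymptotic expansion of $\log_2\binom{n}{n/2}$ directly, or contrast this with the preceding proposition: for $\epsilon>0$ Hoeffding's bound makes $I_{n,\epsilon}$ a $1-e^{-\Omega(n)}$ fraction of $\{0,1\}^n$, whereas for $\epsilon=0$ only a single ``layer'' of the Boolean cube survives, and that restriction costs precisely the $\tfrac12\log_2 n$ gap.
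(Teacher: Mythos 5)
Your proposal is correct and follows the same route as the paper: identify $I_{n,0}$ with the set of weight-$n/2$ strings, apply Stirling's approximation $\binom{n}{n/2}\sim 2^n/\sqrt{\pi n/2}$, and take the base-$2$ logarithm. The paper simply quotes this limit where you derive it from the factorial form of Stirling, but the argument is essentially identical.
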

\begin{proof}
Stirling's formula gives
$$ \displaystyle
\lim_{n\to \infty} \binom{n}{n/2}\Big/\frac{2^n}{\sqrt{\pi n/2}} = 1.
$$ 
We get the result by taking the $\log$.
\end{proof}

Thus, by choosing $\epsilon > 0$, we avoid asymptotically 
loosing more than $0.5\log_2(n)$ bits of information.

\subsubsection{Probability of aborting Protocol 1$'$.}
The protocol aborts if there are 
less than $h$ zeros or $h$
ones left in the $\SIFT$ string after $n$ $\TEST$ bits have been chosen, where
$h=\lfloor (1+\epsilon)n/2\rfloor$. We prove that this
occurs with a probability that decreases exponentially with $n$.

\begin{prop} For any $0 \leq \epsilon < \delta$ and $\epsilon \leq 1$
fixed by the protocol, the probability that it aborts 
is exponentially small.
\end{prop}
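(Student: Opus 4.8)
The plan is to reduce the abort condition to two tail events about the composition of the SIFT string, and then bound each by Hoeffding's inequality in essentially the same spirit as the entropy estimate above. First I would recall the structure of the relevant bits: among the $N=\lceil 8n(1+\delta)\rceil$ qubits sent by Alice, the qubits that become SIFT bits are exactly those for which Alice chose the $Z$ basis and Bob chose to measure; since these two independent coin flips are each fair and independent of the (also uniformly random) bit value Alice encodes, each transmitted qubit independently becomes a SIFT bit carrying value $0$ with probability $1/4$, a SIFT bit carrying value $1$ with probability $1/4$, and ``something else'' with probability $1/2$. By Theorem~\ref{theoreminfoprotocol1} Eve's undetectable attack cannot bias Bob's statistics (this was noted explicitly after that theorem), so the joint distribution of the SIFT string is the ideal one even in the presence of Eve. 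Thus the number $S_0$ of SIFT zeros and the number $S_1$ of SIFT ones are each sums of $N$ independent Bernoulli$(1/4)$ indicators (negatively correlated with each other, but that will not matter since I will union-bound).

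Next I would express the abort event of Step~7$'$(a). After Alice removes $n$ TEST bits, she needs at least $h=\lfloor(1+\epsilon)n/2\rfloor$ zeros and at least $h$ ones remaining; since the TEST bits are at most $n$ bits pulled from the SIFT pool, a sufficient condition to avoid abort is $S_0 \ge h+n$ and $S_1 \ge h+n$ (in the worst case all $n$ TEST bits are drawn from one side). Because $h \le (1+\epsilon)n/2 \le n$, it suffices to have $S_0 \ge 2n$ and $S_1 \ge 2n$; and since $\mathbb{E}[S_0]=\mathbb{E}[S_1]=N/4 \ge 2n(1+\delta)$, these are lower-tail deviations of relative size at least $\delta/(1+\delta)$ below the mean. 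Apply Hoeffding's inequality \eqref{ValAbsHoeffding} to each: with $S_0/N$ an average of $N$ independent $[0,1]$-valued variables with mean $1/4$,
\[
\Pr\!\left[ S_0 < 2n \right] \;\le\; \Pr\!\left[ \frac{S_0}{N} - \frac14 < -\frac{\delta}{4(1+\delta)} \right] \;\le\; \exp\!\left(-\frac{\delta^2}{8(1+\delta)^2}\,N\right),
\]
and likewise for $S_1$. A union bound over the two events then gives that Protocol~1$'$ aborts with probability at most $2\exp(-c_\delta N)$ for a constant $c_\delta = \delta^2/(8(1+\delta)^2)>0$ depending only on $\delta$, which is exponentially small in $n$ since $N \ge 8n$. (One should also note that the earlier abort in Step~6, triggered by a bad TEST error-rate, only makes termination more likely and does not affect the bound; and that the choices of $y\in I_{n,\epsilon}$ and of the index list $q_1\ldots q_n$ in Steps~7$'$(b)--(c) never cause an abort once Step~7$'$(a) succeeds, since $|I_{n,\epsilon}|\ge 1$ and $h\ge \lceil n/2\rceil$ guarantees enough zeros and ones to realize any target weight in $[\,n/2-\epsilon n/2,\ n/2+\epsilon n/2\,]$.)

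The main obstacle I anticipate is bookkeeping rather than anything deep: pinning down exactly how many SIFT bits of each parity are needed after the TEST bits are removed, given that Alice gets to choose which SIFT bits become TEST bits and which become the pool, and making sure the ``worst case'' accounting ($h+n$ on each side, hence $2n$) is actually valid — i.e., that Alice can always avoid aborting whenever $S_0,S_1\ge 2n$, regardless of how the TEST selection interacts with the parity counts. Once that combinatorial reduction is in place, the probabilistic content is a routine two-fold application of Hoeffding exactly as used in the preceding proposition, and the constant in the exponent depends only on $\delta$ (the gap between $N/4$ and $2n$), not on $\epsilon$, which is why the statement holds uniformly for all $0\le\epsilon<\delta$ with $\epsilon\le 1$.
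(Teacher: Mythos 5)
Your proof is correct, but it is organized differently from the paper's. The paper argues in two stages with an auxiliary parameter $\delta'$ satisfying $\epsilon<\delta'<\delta$: first it shows via Hoeffding that the total number $S$ of $\SIFT$ bits exceeds $2n(1+\delta')$ except with probability $e^{-k_1 n}$, and then, conditioning on that event, it applies Hoeffding again to the $V=S-n$ uniformly random bits left after the $\TEST$ bits are removed to show there are at least $h$ zeros and $h$ ones except with probability $2e^{-k_2 n}$; the resulting constants depend on both $\epsilon$ and $\delta$. You instead concentrate directly the counts $S_0$ and $S_1$ of $\SIFT$ zeros and $\SIFT$ ones over all $N$ positions (each a sum of $N$ i.i.d.\ Bernoulli$(1/4)$ indicators), use the worst-case accounting that the $n$ $\TEST$ bits can remove at most $n$ bits from either side, so $S_0,S_1\geq h+n$ (and $h+n\leq 2n$ since $\epsilon\leq 1$) suffices, and finish with one Hoeffding bound per side plus a union bound. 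This avoids the conditioning and the intermediate $\delta'$, yields an exponent $\delta^2/(8(1+\delta)^2)$ depending only on $\delta$, and in fact never uses the hypothesis $\epsilon<\delta$, so your argument is slightly more general (any $\epsilon\leq 1$, $\delta>0$); the paper's route is a bit less pessimistic about how many zeros the $\TEST$ bits can consume, but pays for it with the constraint $\epsilon<\delta'$ and $\epsilon,\delta$-dependent constants. Your worst-case reduction is valid irrespective of how the $\TEST$ bits are selected, which settles the bookkeeping concern you raise. One trivial slip: the one-sided bound you write follows from \eqref{SymmetricHoeffding} (or Theorem~\ref{hoeffding} directly), not from the two-sided \eqref{ValAbsHoeffding}, which would carry an extra harmless factor of $2$.
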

\begin{proof} 
We begin with showing that, besides an exponentially small probability,
the number of $\SIFT$ bits is larger than $N/4$. 
We follow  by showing that this is enough for having at least $h$ zeros and ones,
except for exponential probability.
Let $\delta'$ be a real number such that
$\epsilon < \delta' < \delta$.  Let $N=\lceil 8n(1+\delta)\rceil$.
For $i$ such that $1 \leq i \leq N$, let $X_i=1$ if the qubit $i$ is $\SIFT$ and $X_i=0$ otherwise.
The variables $X_i$ are clearly independent; their distribution is a Bernoulli with $p=0.25$, 
as shown in Fig~\ref{fig:protsum}.
The random variable $S$ giving the number of $\SIFT$ bits
is $S = \sum_{i=1}^N X_i$. 
Denote $\overline{X} = S/N$; it is clear that $E[\overline{X}] = 1/4$, and 
we can bound ${\Pr[S \leq N/4]}$ using Hoeffding (Theorem~\ref{hoeffding}),
 \begin{align*}
\Pr\big[S \leq 2n(1+\delta')\big] &\leq \Pr\left[\overline{X} \leq \frac{1}{4}\frac{1+\delta'}{1+\delta}\right] \\
 &\leq \Pr\left[ \overline{X} -\frac{1}{4} \leq - \frac{\delta - \delta'}{4(1+\delta)}\right] \\
  & \leq \exp\left({-\frac{1}{8}\left( \frac{\delta-\delta'}{1+\delta}\right)^2n}\right)
\end{align*}
and thus
\[
\Pr\big[S > 2n(1+\delta')\big] \geq 1 - e^{-k_1n}
\]
for $k_1 = 1/8\left[(\delta' - \delta)/(1+\delta)\right]^2$.

For each $S> 2n(1+\delta')$, the $S$ bits are distributed uniformly. After $n$ $\TEST$ bits are chosen,
the remaining $S - n > 2n(1+\delta') - n =  n(1+2\delta')$ bits are still uniformly distributed.
Every time there are at least $h$ zeros and $h$ ones after
the $n$ $\TEST$ bits are chosen, and 
in addition there are more than
$2n(1+\delta')$ $\SIFT$ bits, the protocol succeeds. 
As a consequence, the probability of success is larger than or equal to
the probability that $S > 2n(1+\delta')$ times the probability that the $S-n$ remaining bits
contain at least $h$ zeros and $h$ ones, given that $S > 2n(1+\delta')$.
Let $V$ be the length of the string $v$, i.e.\ 
 $V = S-n > n(1+2\delta')$. Let us index the bits in $v$ from 1 to $V$, let
 $Z_i = 1$ if bit $i$ is $0$ and $Z_i = 0$ otherwise, let $Z=\sum_{i=1}^V Z_i$  and let $\overline{Z} 
= Z/V$; $Z$ is thus the number of bits equal to $0$ in $v$; the $Z_i$ are Bernouilli with $p=1/2$ and
are independent.
Let us denote $\Pr_V$ the probability conditional to that particular value of $V$. The probability
that there are strictly less than $h$ zeros in $v$ is bounded by
\begin{align*}
\Pr_V[Z < h] &\leq \Pr_V[Z \leq (1+\epsilon)n/2] \\
 &= \Pr_{V}[\overline{Z} \leq (1+\epsilon)n/(2V)] \\
 &\leq \Pr\left[\overline{Z} \leq \frac{1+\epsilon}{2(1+2\delta')}\right] \\
 &= \Pr\left[\overline{Z} - \frac{1}{2} \leq -\frac{2\delta' - \epsilon}{2(1+2\delta')}\right]
\end{align*}
where $\delta' > \epsilon$ by hypothesis
and again, by Hoeffding (Theorem~\ref{hoeffding}), 
the probability that there are not enough 
zeros when $S>2n(1+\delta')$ is  bounded by
\[
\exp\left( - \frac{1}{2}\left(\frac{2\delta' - \epsilon}{1+2\delta'}\right)^2 n\right)
\]
and the probability that there are at least $h$ zeros and $h$ ones when
$S > 2n(1+\delta')$ is larger than or equal to
\(
{1 - 2e^{-k_2n}}
\) 
with $k_2 = \displaystyle\frac{1}{2} \left(\frac{2\delta' - \epsilon}{1+2\delta'}\right)^2$.
As a consequence, the probabilty that the protocol succeeds is at least
\[
(1 - e^{-k_1n})(1 -2e^{-k_2n}) = 1 - e^{-k_1n} -2e^{-k_2n} + 2e^{-(k_1+k_2)n}
\]
which is more that
\(
1 - 3e^{-kn}
\) 
with $k = \min\{k_1,k_2\}$. It is exponentially close to 1 with $n$.
\end{proof}

\subsection{Complete robustness of Protocol 1$'$.}

The assumption is that Eve's attack is undetectable,
and we want to show that she gets no information
on the $\INFO$ string. During the execution of the protocol,
Eve learns which are the $\TEST$ bits,
she learns their values, she learns the number
of bits measured by Bob and, more importantly, 
her attack allows her to know their Hamming weight.
We group all those data in the multivariate random variable 
$\mathbf{R}$ of which the 
details will be irrelevant; $\mathbf{r}$ will be a particular set of data. The execution of the protocol
also gives Eve the set of indices $q$ such that $v_q = y$. What we want to show is that
\begin{equation}\label{whattoprove}
I(Y; Q, \mathbf{R}) = 0,
\end{equation}
i.e.\ the mutual information between the $\INFO$ string $y$ and what Eve knows, namely
 $(q, \mathbf{r})$, is zero.

\subsubsection{Probabilistic setup.}

Let $F$ be the set of indices measured by Bob.
By Theorem~\ref{theoreminfoprotocol1}, if Eve is unnoticeable, her final state may depend only
on $|i_F|$.
Eve's final state does not depend on $y$ either.
That implies that, whatever $\mathbf{r}$ Eve learns and for any value $y\in \{0,1\}^n$
\begin{equation}\label{eqsetup}
|i_F| = |i'_F| \implies p(i \mid y, \mathbf{r}) = p(i' \mid y, \mathbf{r}).
\end{equation}

For $h = \lfloor (1+\epsilon)n/2\rfloor$, 
Alice chooses $2h$ indices in $F$ that are $\SIFT$ 
bits and not $\TEST$ bits, say $E$.
Let $E_h$ be the set of all balanced  strings $x$ indexed
by $E$, 
i.e.  
\begin{equation}
E_h = \{ x \in \{0,1\}^E \mid |x| = h\}.
\end{equation}
\begin{lemma}\label{lemprobx}
For any $x, x' \in E_h$, 
\begin{equation}\label{eqpx}
p(x\mid y, \mathbf{r}) = p(x' \mid y, \mathbf{r})= \frac{1}{|E_h|} = \frac{{h!}^2}{(2h)!}.
\end{equation}
\end{lemma}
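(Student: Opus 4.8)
The plan is to peel the three ingredients of the conditioning apart one at a time: first the drawn string $y$, then the constraint imposed by undetectability, and finally a pure counting argument.

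First I would argue that $y$ plays no role. The string $y \in I_{n,\epsilon}$ is drawn by Alice in Step~7$'$(b) with fresh randomness that is independent of $v$ (hence of $x = v_E$), of $\mathbf{r}$, and of the index set $E$; moreover Step~7$'$(c) is always feasible, since every $y \in I_{n,\epsilon}$ satisfies $|y| \leq \lfloor (1+\epsilon)n/2 \rfloor = h$ and $n - |y| \leq h$, so that $x$ (which has exactly $h$ ones and $h$ zeros) always contains $y$ as a sub-selection and no value of $y$ is ruled out a posteriori. Hence $p(x \mid y, \mathbf{r}) = p(x \mid \mathbf{r})$, and it suffices to prove $p(x \mid \mathbf{r}) = 1/|E_h|$ for every $x \in E_h$.

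Next I would feed in Theorem~\ref{theoreminfoprotocol1} via Eq.~\eqref{eqsetup}: once Eve is undetectable, everything she and the protocol do can be correlated with Bob's measured string $i_F$ only through its Hamming weight $|i_F|$, and $\mathbf{r}$ records the $\TEST$ positions, the $\TEST$ values and the value $w = |i_F|$. Since Alice's raw bits are i.i.d.\ uniform, this means the posterior distribution of Bob's measured string given $\mathbf{r}$ is exactly uniform over the set of strings on $F$ of Hamming weight $w$ that agree with the announced $\TEST$ values on the $\TEST$ positions.

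It remains to count. Write $F = E \sqcup T \sqcup F'$, with $T$ the set of $\TEST$ positions (disjoint from $E$ since $E$ consists of non-$\TEST$ bits) and $F'$ the rest of $F$. For $x \in E_h$ the contribution $|x| = h$ to $|i_F|$ is the same, and $|i_T|$ is fixed by $\mathbf{r}$; so the number of strings consistent with $\mathbf{r}$ whose restriction to $E$ equals $x$ is $\binom{|F'|}{\,w - h - |i_T|\,}$ times a factor, coming from the coordinates outside $F$, that does not depend on $x$. This count is therefore the same for every $x \in E_h$. Since Alice chooses $E$ precisely so that $v_E$ is balanced, $p(v_E = x \mid \mathbf{r}) = 0$ for $x \notin E_h$, so the $|E_h|$ equal numbers $\{p(x \mid \mathbf{r}) : x \in E_h\}$ sum to $1$; hence each equals $1/|E_h| = 1/\binom{2h}{h} = (h!)^2/(2h)!$, which is Eq.~\eqref{eqpx}. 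The crux, I expect, is the middle step: converting the operational fact ``Eve learns at most $|i_F|$'' into the sharp probabilistic statement that the posterior on Bob's measured string is exactly the uniform distribution on the appropriate Hamming sphere. The subtlety is to make sure that neither the draw of $y$ nor Alice's rule for selecting the $2h$ positions of $E$ secretly biases the outcome within $E_h$; this is exactly why both choices must use randomness independent of the bit values, and it is where Eq.~\eqref{eqsetup} does the real work. Given that, the remaining symmetry and counting are routine.
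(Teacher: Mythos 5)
Your proof is correct and takes essentially the same route as the paper: the paper pairs the completions of $x$ and $x'$ termwise and applies Eq.~\eqref{eqsetup} (the two full strings have equal weight on $F$ because $|x|=|x'|=h$), which is exactly your ``uniform posterior on a Hamming sphere'' plus binomial-counting step, with your preliminary removal of $y$ by independence being harmless but unnecessary since the paper simply carries $y$ along. Like the paper, you leave the normalization over $E_h$ (that the equal probabilities sum to $1$ because Alice's choice of $E$ forces $v_E$ to be balanced) implicit, so nothing beyond the paper's own level of rigor is missing.
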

\begin{proof}
To simplify notations, and without loss of generality, assume that $E= \{1,\ldots, 2h\}$
so that $\{0,1\}^E$ is the set of bitstrings with indices from $1$ to $2h$,
and $F = \{1,\ldots, |F|\}$;
$p(x \mid y, \mathbf{r}) = \sum_{vv'} p(xvv' \mid y, \mathbf{r})$ where 
$v$ are all bitstrings
with indices in $\{2h+1,\ldots, |F|\}$ and $v'$ are those with indices in $\{|F|+1,\ldots, N\}$;
similarly $p(x' \mid y, \mathbf{r}) = \sum_{vv'} p(x'vv'\mid y, \mathbf{r})$;
if we let $i=xvv'$ and $i'=x'vv'$ then $xv= i_F$,
$x'v = i'_F$ and $|i_F| = |xv| = |x|+|v| = |x'|+|v| = |x'v| = |i'_F|$ and thus,
by (\ref{eqsetup}), $p(i\mid y,\mathbf{r}) = p(i'\mid y,\mathbf{r})$ and the
two sums are equal.
\end{proof}

\subsubsection{Combinatorial lemmas.}

Given a set $E$ and $k \leq |E|$, we denote $\mathcal{P}(E,k)$ the set of permutations of
$k$ elements in $E$, i.e.\ the set of strings
 $q_1\ldots q_k$  of $k$ distinct elements in $E$\footnote{For simplifying the notations,
bits keep their indices even when they appear in substrings.};
\begin{equation}\label{eq1}
\left| \mathcal{P}(E,k)\right| =  \frac{|E|!}{(|E|-k)!}.
\end{equation}
{}From now on, $\epsilon$ such that $0\leq \epsilon \leq 1$, $\epsilon < \delta$ will be fixed, as
well as $h = \lfloor (1+\epsilon)n/2\rfloor$ and $E$, a set of $2h$ indices of $\SIFT$ bits that are
not $\TEST$ bits.
For $y\in I_{n,\epsilon}$ and $x\in E_h$ we let
\[
Q(x,y) = \left\{q \in \mathcal{P}(E,n) \mid x_q = y\right\}.
\]

\begin{lemma} 
For all $y\in I_{n,\epsilon}$ and $x\in E_h$
the number of elements $\left| Q(x,y) \right|$ of \ $Q(x,y)$ is
\begin{equation}\label{eqqxy}
\left| Q(x,y) \right| = \frac{{h!}^2}{(h-n+|y|)!\times (h - |y|)!}
\end{equation}
\end{lemma}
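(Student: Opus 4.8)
The plan is to count $Q(x,y)$ directly by splitting $E$ according to the two ``colours'' assigned by $x$. Write $E^1 = \{e\in E \mid x_e = 1\}$ and $E^0 = \{e\in E\mid x_e=0\}$; since $x\in E_h$ we have $|E^1| = |E^0| = h$. Likewise split the $n$ positions of $y$ into $P^1 = \{j \mid y_j = 1\}$ and $P^0 = \{j\mid y_j=0\}$, so that $|P^1| = |y|$ and $|P^0| = n-|y|$. The point is that a list $q = q_1\ldots q_n \in \mathcal{P}(E,n)$ lies in $Q(x,y)$ if and only if $q_j \in E^1$ for every $j\in P^1$ and $q_j\in E^0$ for every $j\in P^0$ (the distinctness of the $q_j$ being automatic within each colour class since the two classes are disjoint).

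Consequently, specifying such a $q$ amounts to two \emph{independent} choices: reading off the entries of $q$ at the positions in $P^1$ (in their natural increasing order) gives an ordered list of $|y|$ distinct elements of $E^1$, i.e.\ an element of $\mathcal{P}(E^1,|y|)$, and reading off the entries at the positions in $P^0$ gives an element of $\mathcal{P}(E^0, n-|y|)$; conversely any such pair reassembles to a unique $q\in Q(x,y)$. By Eq.~(\ref{eq1}) the first set has $h!/(h-|y|)!$ elements and the second has $h!/(h-(n-|y|))! = h!/(h-n+|y|)!$ elements, and multiplying yields Eq.~(\ref{eqqxy}).

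The only remaining point is that these expressions are well defined, i.e.\ that $0 \le h-|y|$ and $0\le h-n+|y|$ for every $y\in I_{n,\epsilon}$. From the definition (\ref{Inepsilondef}) we have $|y| \le (1+\epsilon)n/2$, and since $|y|$ is an integer this gives $|y| \le \lfloor (1+\epsilon)n/2\rfloor = h$; similarly $|y| \ge (1-\epsilon)n/2$ forces $n-|y| \le (1+\epsilon)n/2$, hence $n-|y| \le h$ by the same integrality argument. Thus $h-|y|\ge 0$ and $h-n+|y|\ge 0$, and all factorials make sense.

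There is no real obstacle here: the argument is a straightforward product rule. The only care needed is in the bookkeeping between the ``ordered list'' presentation of $\mathcal{P}(E,n)$ and the colour-split presentation (so that the factorization into $\mathcal{P}(E^1,|y|)\times\mathcal{P}(E^0,n-|y|)$ is genuinely a bijection), and in making sure the two floor inequalities point in the right direction. I would state the colour-class bijection explicitly, as above, to make both of these transparent.
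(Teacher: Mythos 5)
Your proof is correct and follows essentially the same route as the paper's: split $E$ into the $h$ positions where $x$ is $0$ and the $h$ positions where $x$ is $1$, put $Q(x,y)$ in bijection with $\mathcal{P}(E_1,|y|)\times\mathcal{P}(E_0,n-|y|)$, and apply Eq.~(\ref{eq1}). Your extra check that $|y|\le h$ and $n-|y|\le h$ is the same observation the paper makes when noting that $y\in I_{n,\epsilon}$ exactly when it has at most $h$ zeros and $h$ ones.
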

\begin{proof} 
A string $y\in \{0,1\}^n$ is in $I_{n,\epsilon}$ 
if and only if it contains at most $h$ zeros and $h$ ones.
Let $E_0 = \{j \in E \mid x_{j} = 0\}$ and $E_1 = \{j \in E  \mid x_{j} = 1\}$;
$|E_0| = |E_1| = h$ 
and the permutations $q$ such that $x_q = y$ 
are in 1---1 correspondence with the elements of
\[
\mathcal{P}\left(E_0, n-|y|\right) \times \mathcal{P}\left(E_1, |y|\right)
\]
corresponding to the $n-|y|$ indices giving a $0$ in $y$ 
and the $|y|$ indices giving a $1$ in $y$.
The result follows from~(\ref{eq1}).
\end{proof}

\begin{lemma} 
 For all $q\in \mathcal{P}(E, n)$ and $y\in I_{n,\epsilon}$ 
\begin{equation}\label{eqnbxstqinQxy}
 \left| \left\{ x \in E_h \mid q\in Q(x,y) \right\}\right| 
 = \binom{2h-n}{h-|y|} 
 \end{equation}
\end{lemma}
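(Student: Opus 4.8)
The plan is to read the statement off directly from what the condition $q\in Q(x,y)$ forces on $x$: it determines $x$ completely on the $n$ positions listed by $q$, leaving the other $2h-n$ positions of $E$ free apart from a single constraint on the total Hamming weight; counting the admissible ways to fill in those free positions yields $\binom{2h-n}{h-|y|}$. In spirit this is the ``dual'' count to the previous lemma, where $x$ was fixed and the $q$'s were counted.

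First I would unwind the definitions. By definition $q\in Q(x,y)$ means $x_q = y$, that is $x_{q_1}x_{q_2}\ldots x_{q_n} = y_1y_2\ldots y_n$, hence $x_{q_j}=y_j$ for every $j\in\{1,\ldots,n\}$. Since $q\in\mathcal{P}(E,n)$, the indices $q_1,\ldots,q_n$ are distinct elements of $E$, so this pins down the values of $x$ on the $n$-element subset $\{q_1,\ldots,q_n\}\subseteq E$, with exactly $|y|$ of those positions carrying a $1$ and $n-|y|$ carrying a $0$. Let $E' = E\setminus\{q_1,\ldots,q_n\}$; since $|E|=2h$, this is a set of $2h-n$ indices. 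A string $x\in\{0,1\}^E$ with $x_q=y$ lies in $E_h$ (i.e.\ has $|x|=h$) precisely when the number of ones it carries on $E'$ equals $h-|y|$. Thus $x\mapsto (x_j)_{j\in E'}$ is a bijection from $\{x\in E_h\mid q\in Q(x,y)\}$ onto the set of $\{0,1\}$-strings indexed by $E'$ of Hamming weight $h-|y|$, of which there are $\binom{2h-n}{h-|y|}$, which is the claim.

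Finally I would record that the count is genuine rather than the conventional value $0$ for an out-of-range argument: since $y\in I_{n,\epsilon}$ and $h=\lfloor(1+\epsilon)n/2\rfloor$, the integer $|y|$ satisfies $n-h\le |y|\le h$, so $0\le h-|y|\le 2h-n$ and the binomial coefficient is in range (indeed nonzero). I do not expect any real obstacle in this lemma; the only care needed is the bookkeeping of which positions are constrained by $q$ and the weight arithmetic $h-|y|$ governing the free positions, and both are immediate from the definitions.
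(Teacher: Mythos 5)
Your proof is correct and follows essentially the same counting argument as the paper: the condition $x_q=y$ fixes $x$ on the $n$ positions listed by $q$, and the remaining $2h-n$ positions must carry exactly $h-|y|$ ones, giving $\binom{2h-n}{h-|y|}$. The added remark that $y\in I_{n,\epsilon}$ forces $n-h\le|y|\le h$, so the binomial coefficient is in range, is a harmless (and correct) extra observation.
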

\begin{proof}
A string $x\in E_h$ is such that $q\in Q(x,y)$ 
if and only if it satisfies $x_q = y$; this means that 
$x_{q_1} = y_1$, \ldots, $x_{q_n} = y_n$ 
(bits indexed by $q$ are fixed), the other bits are arbitrary
provided there is a total of $h$ bits equal to $0$ and 
$h$ bits equal to 1; the desired strings are 
thus obtained by
filling  the $2h-n$ bit positions
whose indices are not in the list $q$ with
$h-|y|$ bits equal to $1$ (and the others equal to $0$);  
 there are
$\displaystyle\binom{2h -n}{h - |y|} $ such strings.
\end{proof}
Eq.~(\ref{eqnbxstqinQxy}) can be rewritten
\begin{equation}\label{eqnbxstqinQxyb}
 \left| \left\{ x \in E_h \mid x_q = y \right\}\right| =
 \frac{(2h-n)!}{(h-n+|y|)! (h-|y|)!}.
\end{equation}
\subsubsection{Proof of robustness.}
We want to show that $q$ leaks no information on $y\in I_{n,\epsilon}$. 
For any fixed $x\in E_h$ and $y
\in I_{n,\epsilon}$, 
the probability that Alice sends $q$  is 
$1/{|Q(x,y)|}$ if $q\in Q(x,y)$, 0 otherwise, 
independently of any value of $\mathbf{r}$:
\begin{equation}\label{eqpqmidxy}
p(q \mid x, y,\mathbf{r}) = \begin{cases}
  \displaystyle\frac{1}{|Q(x,y)|}   
  &\text{if $x_q = y$}\\
  0 &\text{otherwise}
\end{cases}.
\end{equation}

\begin{lemma}\label{labprop} 
For all 
values of $\mathbf{r}$, all
$y\in I_{n,\epsilon}$ and all $q \in \mathcal{P}(E,n)$
\begin{equation}\label{eqindep}
p(q\mid y,\mathbf{r}) =  \frac{(2h- n)!}{(2h)!}.
\end{equation}
\end{lemma}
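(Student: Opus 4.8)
The plan is to compute $p(q \mid y, \mathbf{r})$ by conditioning on the balanced string $x \in E_h$ that Alice picked in Step~7$'$(a) and summing over all possibilities. First I would write, using the law of total probability over $x\in E_h$,
\[
p(q\mid y,\mathbf{r}) = \sum_{x\in E_h} p(q \mid x, y, \mathbf{r})\, p(x\mid y,\mathbf{r}).
\]
By Lemma~\ref{lemprobx}, the factor $p(x\mid y,\mathbf{r})$ is the constant ${h!}^2/(2h)!$ for every $x\in E_h$, so it comes out of the sum. By Eq.~(\ref{eqpqmidxy}), the term $p(q\mid x,y,\mathbf{r})$ is $1/|Q(x,y)|$ when $x_q = y$ and $0$ otherwise; crucially, by the formula~(\ref{eqqxy}) for $|Q(x,y)|$, this nonzero value $\frac{(h-n+|y|)!\,(h-|y|)!}{{h!}^2}$ depends only on $|y|$ and not on which $x$ we are summing over. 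Hence the sum collapses to that constant value times the number of $x\in E_h$ with $x_q = y$, which is exactly $\binom{2h-n}{h-|y|}$ by Eq.~(\ref{eqnbxstqinQxy}) (equivalently~(\ref{eqnbxstqinQxyb})).

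Putting the three pieces together, I would get
\[
p(q\mid y,\mathbf{r}) = \frac{{h!}^2}{(2h)!}\cdot \binom{2h-n}{h-|y|}\cdot \frac{(h-n+|y|)!\,(h-|y|)!}{{h!}^2},
\]
and then expand the binomial coefficient as $\binom{2h-n}{h-|y|} = \frac{(2h-n)!}{(h-|y|)!\,(h-n+|y|)!}$, so that everything except $(2h-n)!/(2h)!$ cancels, yielding the claimed value $\frac{(2h-n)!}{(2h)!}$. Since this is independent of $y$ (and of $\mathbf{r}$, as all the conditional probabilities used were already noted to be independent of $\mathbf{r}$), the lemma follows.

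The only mildly delicate point—and the step I'd expect to need the most care—is justifying the very first conditioning identity: that $p(q\mid x,y,\mathbf{r})$ as given by~(\ref{eqpqmidxy}) really is the correct conditional, i.e.\ that given $x$, $y$, and $\mathbf{r}$, Alice's announced index list $q$ is uniform over $Q(x,y)$, because in Step~7$'$(c) she picks $q$ uniformly among lists with $x_q = y$. This requires observing that $x$ itself is a function of the $\SIFT$ string (hence determined by the data Alice and Bob hold), that $y$ is drawn independently by Alice, and that $\mathbf{r}$ carries no further information constraining the choice of $q$ once $x$ and $y$ are fixed—so the factorization above is legitimate. Everything after that is the routine factorial bookkeeping sketched above.
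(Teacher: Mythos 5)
Your proposal is correct and follows essentially the same route as the paper's proof: condition on $x\in E_h$ via the law of total probability, insert the constant $p(x\mid y,\mathbf{r}) = {h!}^2/(2h)!$ from Lemma~\ref{lemprobx} and $p(q\mid x,y,\mathbf{r})$ from Eq.~(\ref{eqpqmidxy}), then count the $x$ with $x_q=y$ using Eq.~(\ref{eqnbxstqinQxyb}) and watch the factorials cancel. The only difference is your closing remark justifying Eq.~(\ref{eqpqmidxy}) itself, which the paper simply asserts from the description of Step~7$'$ before stating the lemma.
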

\begin{proof}
\begin{align*}
p(q\mid y,\mathbf{r}) &= 
     \sum_{x\in E_h} p(q\mid x,y,\mathbf{r})p(x\mid y,\mathbf{r}) \\
  &= \sum_{x\in E_h \mid x_q = y} \frac{1}{|Q(x,y)|} 
      \frac{h!^2}{(2h)!} 
\\
  &= \sum_{x\in E_h \mid x_q = y} \hspace*{-1em} 
   \frac {(h-n+|y|)!(h - |y|)!}{(2h)!} 
\\
  &= \frac{(2h-n)!(h-n+|y|)!(h -|y|)!}
          {(h-n+|y|)!(h-|y|)!(2h)!} 
\\
  &= \frac{(2h - n)!}{(2h)!} 
\end{align*}
where the second equality is due to 
\eqref{eqpx} and \eqref{eqpqmidxy}, and the third and forth equalities
are given by  \eqref{eqqxy} and \eqref{eqnbxstqinQxyb}.
\end{proof}

\begin{theorem}For all 
$\epsilon$ and $\delta$ such that
$0  \leq \epsilon \leq 1$ and $\epsilon < \delta$, 
the protocol is completely robust, i.e.\ if Eve is
undetectable by the legitimate parties, then
$I(Y; Q, \mathbf{R}) = 0$. 
\end{theorem}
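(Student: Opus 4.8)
The plan is to obtain the theorem directly from Lemma~\ref{labprop} by a short information-theoretic computation. Using the chain rule, $I(Y;Q,\mathbf{R}) = I(Y;\mathbf{R}) + I(Y;Q\mid\mathbf{R})$, and I would show that both terms vanish. For $I(Y;\mathbf{R})$: the $\INFO$ string $Y$ is produced in Step~7$'$(b) by a fresh uniform draw from $I_{n,\epsilon}$, made after the entire quantum phase and statistically independent of Alice's basis and bit choices, of Bob's $\SIFT/\CTRL$ choices, and of Eve's attack and measurement outcome --- hence independent of everything encoded in $\mathbf{R}$ (the $\TEST$ positions and values, the number of qubits Bob measured, and the Hamming weight Eve extracts, which by Theorem~\ref{theoreminfoprotocol1} is a function of $|i_F|$ alone, not of $y$). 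Conditioning on the event that Step~7$'$(a) does not abort does not spoil this, since that event is a function of $v$ only; so $p(\mathbf{r}\mid y)=p(\mathbf{r})$ for every $y$ and $I(Y;\mathbf{R})=0$. (If one wants to keep the set $E$ of Step~7$'$(a) explicit --- the earlier lemmas are stated with $E$ fixed --- one simply adjoins $E$ to $\mathbf{R}$: $E$ is a function of $v$ and of independent coins, hence independent of $Y$, and $I(Y;Q,\mathbf{R})\le I(Y;Q,\mathbf{R},E)$, so killing the latter suffices.)

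For $I(Y;Q\mid\mathbf{R})$: fix $\mathbf{r}$ (and $E$). Lemma~\ref{labprop} says that for every $y\in I_{n,\epsilon}$ and every $q\in\mathcal{P}(E,n)$, $p(q\mid y,\mathbf{r}) = (2h-n)!/(2h)! = 1/|\mathcal{P}(E,n)|$; i.e.\ conditioned on $(Y,\mathbf{R})$ the announced list $Q$ is uniform on $\mathcal{P}(E,n)$ with a law independent of $y$. Hence $H(Q\mid Y=y,\mathbf{R}=\mathbf{r})$ is the same for all $y$, so $H(Q\mid Y,\mathbf{R})=H(Q\mid\mathbf{R})$ and $I(Y;Q\mid\mathbf{R})=0$. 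Adding the two contributions gives $I(Y;Q,\mathbf{R})=0$, which is the asserted complete robustness. (Equivalently, Bayes gives $p(y\mid q,\mathbf{r})\propto p(q\mid y,\mathbf{r})\,p(y\mid\mathbf{r})$, both factors being $y$-independent, so normalization forces $p(y\mid q,\mathbf{r}) = 1/|I_{n,\epsilon}| = p(y)$.)

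All the real work is upstream and can be invoked: Theorem~\ref{theoreminfoprotocol1} reduces Eve's final state to a function of $|i_F|$, Eq.~(\ref{eqsetup}) promotes this to ``the posterior on $i$ given $(y,\mathbf{r})$ depends only on $|i_F|$'', Lemma~\ref{lemprobx} turns that into the uniform posterior $h!^2/(2h)!$ over the balanced strings of $E_h$, and the combinatorial identities~(\ref{eqqxy}) and~(\ref{eqnbxstqinQxyb}) make the weighted sum in Lemma~\ref{labprop} telescope to a $y$-free value. So the one place this final step could slip is the probabilistic bookkeeping --- chiefly making the ``obvious'' independence $I(Y;\mathbf{R})=0$ fully rigorous given the order in which Alice samples and the conditioning on non-abort --- and that is where I would take care; no new quantitative estimate is required.
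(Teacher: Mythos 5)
Your proposal is correct and follows essentially the same route as the paper: the heart of both arguments is that Lemma~\ref{labprop} makes $p(q\mid y,\mathbf{r})$ a constant (namely $1/|\mathcal{P}(E,n)|$), combined with the independence of $Y$ from $\mathbf{R}$ coming from Alice's fresh uniform draw of $y$. The paper merely states the resulting factorization $p(q,y,\mathbf{r})=p(q)p(y)p(\mathbf{r})$ directly rather than splitting $I(Y;Q,\mathbf{R})$ by the chain rule, which is an equivalent bookkeeping choice.
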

\begin{proof}
The parameters $n$ and $\epsilon$ are constants of the protocol; they are fixed before all
random choices of Alice or Bob, and all measurements. So is the value
$h = {\lfloor (1+\epsilon)n/2\rfloor}$. 
The right-hand side of Eq.~(\ref{eqindep}) is thus
a constant\footnote{From (\ref{eq1}), we see that 
$P(q\mid y,\mathbf{r}) = 1/|\mathcal{P}(E,n)|$
which is the probability of a random $n$-permutation of $|E| = 2h$ elements.} and
Lemma~\ref{labprop}  implies that  the random
variables $Q$ and $(Y,\mathbf{R})$ are independent:
 $p(q, y, \mathbf{r}) = p(q)p(y,\mathbf{r})$;
the variables $Y$ and $\mathbf{R}$ must also be independent,
because Alice chooses $y$
randomly, independently of everything else:  $p(y,\mathbf{r}) = p(y)p(\mathbf{r})$.
This implies that
$p(q,y,\mathbf{r}) = p(q)p(y)p(\mathbf{r})$,
$Y$ is independent of $(Q,\mathbf{R})$, therefore $I(Y; Q,\mathbf{R} ) = 0$.
\end{proof}

\section{Conclusion}
We presented two protocols for QKD with one party 
performing only classical
operations: measure a qubit in the classical $\{0,1\}$ basis, 
let the qubit pass undisturbed back to its sender, 
randomize the order of several qubits, or resend a qubit after
its measurement.
We proved the robustness of these protocols;
this provides intuition why we believe they are secure.
We hope that this work sheds light on ``how much quantumness'' is required 
in order to perform the classically-impossible task of secret key
distribution.
This work extends the previous work~\cite{Ken07} 
and the conference version~\cite{BKM07conf}
by two aspects: it proves robustness of the measure-resend SQKD Protocol for a more general scenario
and proves the full robustness of a randomization-based SQKD Protocol, eliminating any information leak to the
adversary. 

Note that in this work we assumed perfect qubits.
We leave the examination 
of our protocol against PNS and other implementation-dependent attacks
to future research.
This work was partially supported by the Israeli MOD.
We thank Moshe Nazarathy for providing the motivation for this research.

\section*{APPENDIX}
\begin{theorem}[Hoeffding]\label{hoeffding}
(Hoeffding~\cite{Hoeffding63}) If \ $X_1$, \ldots, $X_n$ are independent 
random variables with finite first and second moments,  
$\Pr[a_i \leq X_i \leq b_i] = 1$ for $1\leq i \leq n$,
and  $\displaystyle\bar{X} = \frac{1}{n}\sum_{i=1}^n X_i$, then

\begin{align*}
 \Pr\Big[\bar{X} - E[\bar{X}] \geq \kappa\Big] &\leq 
  \exp\left( - \frac{2\kappa^2n^2}{\sum_{i=1}^n (b_i-a_i)^2}\right)
\end{align*}
where $\exp(x) = e^x$. When $0 \leq X_i \leq 1$, this gives 
(by symmetry for \eqref{SymmetricHoeffding} and summation for \eqref{ValAbsHoeffding})
\begin{align}
 \Pr\Big[\bar{X} - E[\bar{X}] \geq  \phantom{-}\kappa\Big] &\leq 
   \exp\left( - 2\kappa^2 n\right) \nonumber \\
 \Pr\Big[\bar{X} - E[\bar{X}] \leq  -\kappa\Big] &\leq 
   \exp\left( - 2\kappa^2 n\right) \label{SymmetricHoeffding} \\
 \Pr\Big[\big|\bar{X} - E[\bar{X}]\big| \geq \kappa\Big] &\leq 2\exp\left( - 2\kappa^2 n\right). \label{ValAbsHoeffding}
\end{align}

\end{theorem}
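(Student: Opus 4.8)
The plan is to prove the one-sided main bound by the standard Chernoff (exponential-moment) method and then recover the three special-case inequalities by rescaling, symmetry, and a union bound. First I would center the variables: set $Y_i = X_i - E[X_i]$, so that $E[Y_i]=0$ and $a_i - E[X_i] \le Y_i \le b_i - E[X_i]$, a range of width $b_i-a_i$. Since $\{\bar X - E[\bar X] \ge \kappa\}$ is exactly the event $\{\sum_i Y_i \ge n\kappa\}$, for any $t>0$ I would apply Markov's inequality to the nonnegative random variable $e^{t\sum_i Y_i}$, obtaining
\[
\Pr\Big[\textstyle\sum_i Y_i \ge n\kappa\Big] \le e^{-tn\kappa}\,E\big[e^{t\sum_i Y_i}\big] = e^{-tn\kappa}\prod_{i=1}^n E\big[e^{tY_i}\big],
\]
where the factorization of the expectation into a product uses independence of the $X_i$.

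The crux is a pointwise bound on each factor, namely Hoeffding's lemma: if $E[Y]=0$ and $c\le Y\le d$ almost surely, then $E[e^{tY}] \le e^{t^2(d-c)^2/8}$. I would prove it using convexity of $y\mapsto e^{ty}$: on $[c,d]$ one has $e^{ty}\le \frac{d-y}{d-c}e^{tc}+\frac{y-c}{d-c}e^{td}$, and taking expectations (using $E[Y]=0$) bounds $E[e^{tY}]$ by $(1-p)e^{tc}+pe^{td}$ with $p=-c/(d-c)$. Writing this as $e^{\phi(u)}$ with $u=t(d-c)$ and $\phi(u)=-pu+\ln(1-p+pe^u)$, a direct computation gives $\phi(0)=\phi'(0)=0$ and $\phi''(u)=q(1-q)\le 1/4$, where $q=pe^u/(1-p+pe^u)\in[0,1]$; Taylor's theorem then yields $\phi(u)\le u^2/8 = t^2(d-c)^2/8$. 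This lemma is the one genuinely nontrivial step, and the estimate $\phi''\le 1/4$ is where essentially all the work sits.

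Applying the lemma with $d-c=b_i-a_i$ gives $\prod_i E[e^{tY_i}]\le \exp\big(\tfrac{t^2}{8}\sum_i(b_i-a_i)^2\big)$, so the tail is at most $\exp\big(-tn\kappa+\tfrac{t^2}{8}\sum_i(b_i-a_i)^2\big)$. I would then minimize this exponent over $t>0$: the optimum is $t^\ast = 4n\kappa/\sum_i(b_i-a_i)^2$, and substituting it collapses the exponent to $-2n^2\kappa^2/\sum_i(b_i-a_i)^2$, which is precisely the stated bound for $\Pr[\bar X - E[\bar X]\ge\kappa]$.

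Finally I would read off the three special-case inequalities for $0\le X_i\le 1$. Here $b_i-a_i=1$, so $\sum_i(b_i-a_i)^2=n$ and the main bound becomes $\exp(-2\kappa^2 n)$, giving the first line. Applying this same bound to the variables $1-X_i$ (equivalently, to $-Y_i$) yields the symmetric lower-tail estimate \eqref{SymmetricHoeffding}, and adding the two one-sided bounds by the union bound produces the two-sided inequality \eqref{ValAbsHoeffding} with its factor of $2$.
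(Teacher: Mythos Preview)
Your proof is correct and is in fact the standard textbook argument for Hoeffding's inequality. However, the paper does not prove this theorem at all: it is stated in the appendix purely as a quoted result, with attribution to Hoeffding's original 1963 paper, and is used elsewhere as a black box. So there is nothing to compare against; you have supplied a complete proof where the paper simply cites one.
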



\urlstyle{rm}


\end{document}